\newtheorem{observation}{Observation}
\newtheorem{theorem}{Theorem}
\newtheorem{definition}{Definition}
\newtheorem{corollary}{Corollary}
\newtheorem{conjecture}{Conjecture}
\newtheorem{axiom}{Axiom}
\newtheorem{example}{Example}
\newtheorem{lemma}{Lemma}
\newenvironment{axiombis}[1]
  {%
   \addtocounter{axiom}{-1}%
   \begin{axiom}}
  {\end{axiom}}
\title {Being Central on the Cheap:
Stability in Heterogeneous Multiagent Centrality Games}
\author {Gabriel Istrate and Cosmin Bonchi\c{s}\footnote{West University of Timi\c{s}oara, Romania. email: gabrielistrate@acm.org}}
\begin{document}
\maketitle
\begin{abstract} We study strategic network formation games in which agents attempt to form (costly)  links in order to maximize their network centrality. Our model derives from Jackson and Wolinsky's \emph{symmetric connection model}, but allows for heterogeneity in agent utilities by replacing 
\emph{decay centrality} (implicit in the  J.-W. model) by a variety of classical centrality and game-theoretic measures of centrality. We are primarily interested in characterizing the asymptotically pairwise stable networks, i.e. those networks that are pairwise stable for all sufficiently small, positive edge costs. We uncover a rich typology of  stability: 
\begin{itemize} 
\item[-] we give an axiomatic approach to network centrality that allows us to predict the stable network for a rich set of combination of centrality utility functions, yielding stable networks with features reminiscent of structural properties such as "core periphery" and "rich club" networks. 
\item[-] We show that a simple variation on the model renders it universal, i.e. every network may be a stable network. 
\item[-] We also show that often we can infer a significant amount about agent utilities from the structure of stable networks. 
\end{itemize} 
\end{abstract} 

\section{Introduction}

Centrality in social networks is a topic that has seen an overwhelming amount of recent work at the intersection of Social Network Analysis \cite{wasserman1994social}, Physics of Complex Systems \cite{newman2018networks}, Economics \cite{jackson-networks}, Theoretical Computer Science and Artificial Intelligence \cite{easley2010networks}. Many of the leading models of network formation are stochastic. In reality, networks form and evolve as a consequence of \emph{agent incentives}. Among these incentives, \emph{centrality maximization} is certainly a pervasive one. To give just one all too familiar example: the increasingly competitive nature of the scientific enterprise, coupled perhaps with an ever more common\footnote{and, in our opinion, unfortunate} reliance on quantitative measures\footnote{of "centrality" in the citation network (!), such as the H-index} and rankings of individuals and publishing venues as proxies for research quality, has often resulted in a significant explosion in the number of submissions to venues perceived as "top ones". The decisive factor seems to be, of course, authors' perception that publication in such venues is a way to increase their papers' impact, which ultimately increases their own ''centrality", that they want maximized. 

Somewhat less clear are the \emph{strategic consequences} of agents' propensity for competing for central positions.  By this we mean understanding the manner  in which agents' preferences  for central network positions influences the formation and evolution of networks: in spite of a significant amount of work on the structure of networks (witnessed e.g. by concepts such as \emph{core-periphery structure} \cite{borgatti2000models}, \emph{the rich-club effect} \cite{zhou2004rich}, or \emph{small-world networks} \cite{watts:swn:book}), and on strategic network formation \cite{jackson-wolinsky,bala2000noncooperative,fabrikant2003network}, the impact of agents' incentives on  the network structure is not completely understood. On the other hand, the inverse problem, that of inferring node utilities from the structure of stable networks, has not seen a comparable amount of investigation. 

The goal of this paper is to contribute to understanding network formation from a game-theoretic perspective that assumes that \textbf{agents are willing to modify network structure (at a small cost per extra link) in order to improve their centrality.} We start from the realisation that the most well-known model of strategic network formation, the \emph{symmetric connection model} \cite{jackson-wolinsky,jackson-networks} can be seen as maximizing agents' \emph{decay centrality} \cite{hurajova20decay,tsakas2018decay}, subject to constant edge cost. In reality, centrality objectives  may differ from agent to agent. The extension we propose accommodates this heterogeneity via an axiomatic approach to network centrality \cite{boldi2014axioms}, presenting axioms that  isolate salient features of such measures, which influence the structure of emerging networks 
in predictible ways. 

A key design dimension for models of network formation is their handling of \emph{tie strength}: 
It is well-known  that weak ties  play an important role in social dynamics, being extremely useful for information dissemination \cite{granovetter1973strength}. Our model restricts agents tie manipulation attempts to \emph{weak ties only}, whose cost of establishing/maintainance can realistically be assumed to be a tiny positive constant. An explanation for such a restriction is not merely modeling convenience, but the existing tension between breaking strong links and centrality maximization: in real life strong links could be either exogeneously imposed (e.g. agents interacting as being part of the same organization, e.g. being coworkers) or have such a high intrinsic cost of breaking (e.g. family ties) that the objective of centrality maximization becomes unrealistic. 
 It is an interesting question to extend our framework to networks with strong ties as well.  

We follow the philosophy of \cite{jackson-wolinsky} (rather than that of  \cite{fabrikant2003network}) by assuming that our model is \emph{symmetric}: link establishment must benefit \emph{both} endpoints. The main questions motivating us are: 

 \begin{mdframed} 
 \begin{itemize} 
 \item[1.] Given a set of agents with diverse centrality objectives, can one predict the structure of stable nets ? 
 \item[2.] Can we obtain richer classes of stable networks than the ones for the classic symmetric connections model ? Are observed features of real-world graphs (e.g. rich-club, core-periphery) compatible with explanations based on centrality maximization ? 
 \item[3.] Can we infer (at least something about) agents'  centrality objectives from the (set of) stable networks ? 
 \end{itemize} 
 \end{mdframed} 
 
 Given the extreme potential heterogeneity of agent objectives, it could seem that no substantive positive answer could be given to the questions above. \textbf{A first contribution of our paper is to show that this is not the case}: we provide  well-behaved examples where the answers to the three previous questions are affirmative. Remarkably, stable networks in our models display (stylized versions of) features such as core-periphery and rich club effect, suggesting that suitable extensions of the J-W model may be able to reproduce such features. A second contribution is the identification of a \textbf{fundamental limit of prediction in centrality maximization models}, a plausible variation in the model specification that renders it \emph{universal}, in the sense that  \textbf{every network} can arise as an equilibrium. The property we highlight is natural, assuming that agents' utility  is subject to a threshold beyond which no extra gain in centrality increases their utility. 
 
 The following is an outline of our main results \textbf{(for all the missing proofs we refer to the Appendix):}  we first show (Theorem~\ref{stable-mon}) that for mixtures of agents satisfying monotonicity axioms, stable networks may contain an unique complex component consisting of a "core" (clique) and a "periphery" of nodes only connected to core nodes. On the other hand (Theorem~\ref{deghom}), for degree homophilic measures  stable networks \emph{display a rich club effect}: they consist of a sequence of  cliques of rapidly decreasing sizes.  
For betweenness centrality games we show (Theorem~\ref{bet-thm}) that connected stable networks have small diameter and are characterized by a property related to  \emph{neighborhood domination} in graphs. We provide some futher analytical results and conjectures (based on simulations) for some less well-behaved centralities. Finally, we show that adding thresholds to the model makes it universal (Theorem~\ref{thr-thm}). 
Some of these cases are complemented by results (Theorems~\ref{learn-mon},~\ref{learn-thr}) on learning agent types from the (set of) stable networks.  
\section{Preliminaries}

We assume familiarity with basics of graph theory, coalitional game theory (see \citet{chalkiadakis2011computational}), strategic network formation (e.g. \citet{jackson-networks}) and centrality measures in social networks \cite{koschutzki2005centrality,das2018study}. In particular, given network $g$ and vertex $i$ of $g$, we will denote by $deg(i)$ the degree of $i$ in $g$, by $N(i)$ the set of neighbors of $i$ in $g$, and by $\widehat{N(i)}$ the set $\{i\}\cup N(i).$ We also denote by $g+ij$ the network obtained by adding missing edge $ij$, by $d(i,j)$ the distance between $i$ and $j$, and by $Conn(i)$ the connected component of $i$ in $g$. We will write $g_1+g_2$ for the disjoint union of two networks $g_1,g_2$. An edge is called \emph{a bridge edge} if its removal disconnects the graph. The neighborhood domination relation is a classical concept in graph theory (e.g. Definition 1.16 in \citet{brandstadt1999graph}), formulated (as \emph{vicinal preorder}) in \citet{foldes1978dilworth}  as follows: 
Given vertices $x,y$, say that \emph{$y$ dominates $x$} (write $x\leq y$) iff 
$N(x)\subseteq N(y)\cup \{y\}$.

\begin{definition} Given node $i$ in network $g$, define: 
\begin{itemize} 
\item[-] The \emph{degree centrality of $i$} is  $
C_{deg}[i]=deg(i).$ 
\item[-] \emph{Linear centralities.} Given a set of nonnegative integral weights $w_{i,j}$ on $V$, define $C_{w}[i] = \sum_{ij\in E(g)} w_{i,j}$.  
\item[-] The \emph{closeness centrality of $i$} is  $
C_{close}[i]=\frac{1}{\sum\limits_{j\in Conn(i)} d(i,j)}.$ 
\item[-] The \emph{eccentricity centrality of $i$} is defined [\footnote{in many papers eccentricity is defined as $max(d(i,j):j\in Conn(g))$. Defining eccentricity centrality like we do has been done before, and has the advantage that bigger values correspond to "more central nodes" .}] as 
$C_{ecc}(i)=0$ if $i$ is isolated, $min(\frac{n-1}{d(i,j)}:j\in Conn(g))$ otherwise. 
\item[-]  The \emph{random walk closeness centrality of $i$} is defined \cite{white2003algorithms} as $C_{close}[i]=\frac{1}{\sum\limits_{j\in Conn(i)} ht[j,i]},$    
where $ht[j,i]$ is the expected time for a random walk started at $j$ to first hit $i.$ 

\item[-] The decay centrality of $i$
 is defined as $C_{dec}[i]=\sum_{j\in g} \beta^{d(i,j)}$, where $\beta$ is a fixed parameter, $0<\beta<1.$

\item[-] The \emph{harmonic centrality of $i$} is 
$C_{harm}[i]=\sum_{j\in g} \frac{1}{d[i,j]}.$
\item[-] The \emph{betwenness centrality of $i$} is 
$C_{between}[i]=\sum\limits_{y\neq i\neq z}
\frac{\sigma_{y,z}(i)}{\sigma_{yz}},$ 
that is the sum of percentages of shortest paths between arbitrary vertices $y,z$ that pass through $i$. 
\item[-]The \emph{random walk betwenness centrality} (a.k.a. \emph{current flow betweenness centrality}) \emph{of $i$} is defined \cite{newman2005measure} as 
$C_{RWB}[i]=\sum\limits_{j\neq i\neq k} r_{j,k},$  
where $r_{j,k}$ is the probability that a random walk  starting at node $j$ with absorbing node $k$ passes through node $i$. 
\item[-] The \emph{eigenvector centrality of $i$} is defined as 
$C_{eig}[i]=w[i]$, where $w$ is the eigenvector corresponding to the largest eigenvalue of the adj. matrix of $g.$ 
\item[-] The \emph{Katz centrality of $i$} is defined as 
$C_{Katz}(i)=\sum\limits_{k=1}^{\infty} \sum\limits_{j=1}^{n} \alpha^k (A^{k})_{ji}$. 
where $\alpha$ is a parameter, $0<\alpha<1.$
\item[-] Pagerank. See e.g. \citet{boldi2017rank} for formal definitions and some properties. 


\end{itemize} 
\end{definition} 


Our last centrality is defined using coalitional games: 

\begin{definition}The \emph{game-theoretic centrality of $i$} is defined as the Shapley value of node $i$ in the coalitional game $(N,v)$, where $v(S)=|S\cup N(S)|$. It has the formula \cite{michalak2013efficient} $C_{GT}[i]= \sum\limits_{j\in \widehat{N(i)}} \frac{1}{deg(j)+1}.$
\end{definition} 

\section{Model and Axiomatic Setting}

Our framework is specified as follows: 

\begin{definition}
The \emph{symmetric connection model with generalized centralities $(C_i)$ and edge cost $c$} is defined as follows: the utility of player $i$ on network $g$ is 
\begin{equation} 
u_{i}(g)=C_{i}(g) - c\cdot deg(i). 
\label{eq:utility}
\end{equation}   
We will occasionally avoid mentioning the family of centralities $(C_i)$ when it is clear from the context. When all $C_{i}$'s are equal to the decay centrality we recover the classical symmetric connection model \cite{jackson-wolinsky}. 
\end{definition} 

An \emph{edge flip of a pair of nodes $i,j$ of a network $g$} is the addition of $ij$ to $g$, if $ij\not \in g$, or its removal  from $g$, if $ij \in g$. The \emph{outcome of an edge flip} is the resulting network $h$. 

\begin{definition}
An edge flip is a \emph{weakly improving move for player $i$} if $u_{i}(h)\geq u_{i}(g)$, and  a \emph{strongly improving move} if $u_{i}(h)> u_{i}(g)$.  An edge flip is \emph{an improving move} iff: 
\begin{itemize} 
\item[-] it is an edge addition that is strongly improving for at least one endpoint and at least weakly improving for both, or 
\item[-] is an edge deletion, strongly improving for some endpoint. 
\end{itemize} 
\end{definition} 

The main model of stable network structure employed in the area of strategic network formation, defined in  \citet{jackson-wolinsky} is: 

\begin{definition} Network $g$ is called \emph{pairwise stable} if no edge flip is an improving move. 
\end{definition} 

In this paper we use a version of pairwise stability that is appropriate to our setting that assumes weak ties only, in which the edge cost is a tiny (but positive) value $\epsilon >0$. Therefore, the following variant of pairwise stability will be our main notion of interest: 

\begin{definition} Network $g$ is called \textbf{asymptotically pairwise stable} (APSN) if there exists $\epsilon_{0}>0$ such that $(\forall \epsilon)$,  $0<\epsilon<\epsilon_0$, $g$ is pairwise stable in the  model with edge cost $\epsilon$. 
\end{definition} 

APSN is a version of pairwise stability, so results on APSN relate to preexisting literature. For instance the original theorem of Jackson-Wolinsky implies the fact that for decay centrality games the unique family of APSN is formed by complete graphs $K_n$. 

\subsection{Axioms for Monotone Network Centralities} 

The first axiom is a simple one, and has been discussed before in the literature \cite{boldi2014axioms}. It formalizes the intuition that  adding edges always improves the centrality of adjacent nodes: 

\begin{axiom}A centrality measure $C$ is \emph{increasing} if whenever $ij\not \in g$, 
$C[i,g+ij]>C[i,g].$ 
\label{axiom-1}
\end{axiom} 

If $C$ is a centrality measure satisfying some axiom then $C^{\prime}=1/(1+C)$ (or even $C^{\prime}=1/C,$ when $C$ is strictly positive) satisfies a correspondingly modified "dual" axiom. For instance, here's the dual of Axiom~\ref{axiom-1}: 

\begin{axiombis}{axiom-1} 
A centrality measure $C$ is \emph{decreasing} if whenever $ij\not \in g$, $
C[i,g+ij]< C[i,g].$ 
\end{axiombis} 

Our next axiom encodes a different type of monotonicity: one in which the benefit of  extra links only incurs for agents already in the same connected component. It is a more precise version of an axiom due to \citet{boldi2017rank}:  

\begin{axiom} A centrality measure $C$ is \emph{componentwise} if it satisfies the following conditions: 
If $ij\not \in g$ and $i,j$ are in the same connected component then $C[i,g+ij]>C[i,g]$. If $ij\not \in g$ and $i,j$ are not in the same connected component in $g$ then   
$C[i,g+ij]\leq C[i,g]$.
\label{axiom-2}
\end{axiom} 

In the dual scenario agent only benefit when forming bridges between previously disconnected components: 

\begin{axiombis}{axiom-2}
A centrality measure $C$ is \emph{peripheral} if it satisfies the following conditions: 
If $ij\not \in g$ and $i,j$ are in the same connected component then $C[i,g+ij]\leq C[i,g]$.
If $ij\not \in g$ and $i,j$ are not in the same connected component in $g$ then   
$C[i,g+ij]> C[i,g]$.  
\end{axiombis}

The setting of Axiom 2 (and, consequently, $2^\prime$) is not vacuous,  as we have: 

\begin{theorem} 
\reversemarginpar
Closeness centrality and random walk closeness centrality satisfy Axiom 2. 
\label{cc-axiom} 
\end{theorem}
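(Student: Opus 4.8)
\noindent\emph{Proof plan.} The plan is to check the two clauses of Axiom~2 separately for each of the two centralities, and to isolate the single point that is not routine. Write $g'=g+ij$ for the graph obtained by adding the missing edge $ij$. For \emph{closeness centrality} both clauses follow from elementary facts about distances. If $i,j$ lie in the same component of $g$, then $Conn(i)$ is unchanged in $g'$, no distance $d(i,k)$ increases, and $d(i,j)$ drops from a value $\geq 2$ (it was $\geq 2$ since $ij\notin g$) to $1$; hence the denominator $\sum_{k\in Conn(i)}d(i,k)$ strictly decreases and $C_{close}[i]$ strictly increases. If $i,j$ lie in distinct components of $g$, then $ij$ is a bridge of $g'$, so no shortest path between two vertices of the old $Conn_g(i)$ crosses it (using $ij$ would force crossing it twice); thus every $d(i,k)$ with $k\in Conn_g(i)$ is unchanged while finitely many strictly positive finite terms $d(i,m)$, $m\in Conn_g(j)$, are added to the denominator, so $C_{close}[i]$ strictly decreases, which is stronger than the required $\leq$. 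The only degenerate case is an isolated $i$, which can occur only under the second clause; we adopt the convention that an isolated vertex has closeness $+\infty$ (empty denominator sum equal to $0$), so that the second clause holds verbatim for every vertex.

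For \emph{random walk closeness centrality} the second clause is the same bridge argument in probabilistic form: if $i,j$ are in distinct components of $g$, a walk on $g'$ started at any $k\in Conn_g(i)$ cannot reach $Conn_g(j)$ before hitting $i$, since that would require crossing the bridge $ij$, i.e., visiting $i$ itself; hence, up to absorption at $i$, the $g'$-walk is distributed exactly as the $g$-walk, so $ht[k,i]$ is unchanged for every $k\in Conn_g(i)$, while the newly joined vertices contribute strictly positive finite terms to the denominator. So the random walk closeness of $i$ strictly decreases, and the same isolated-vertex convention handles the degenerate case.

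The one step that is not routine is the \emph{first} clause for random walk closeness: adding an edge $ij$ incident to $i$ must strictly \emph{decrease} $\sum_{k\in Conn(i)}ht[k,i]$. This is the main obstacle, because hitting times are not monotone under arbitrary edge additions — an edge added away from $i$ can trap the walk and raise some $ht[k,i]$ (e.g. joining two leaves of a star centered at $i$) — and what saves us is precisely that Axiom~2 only ever adds an edge incident to the tracked vertex $i$. To use this, I would realize the walk on $g'$ as the walk on $g$ with an added ``escape hatch'': each time the walk is at $j$ it jumps, independently with probability $1/(deg(j)+1)$ (degree taken in $g$), directly to $i$ and is absorbed, and otherwise moves exactly as the $g$-walk from $j$, while at every vertex other than $j$ the two walks have identical transitions. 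Coupling so that both walks take identical steps until the first time the escape fires, the $g'$-walk is absorbed no later than the $g$-walk, whence $ht_{g'}[k,i]\leq ht_g[k,i]$ for all $k$; and for $k=j$ the escape fires already at step~$0$ with positive probability, yielding hitting time $1$, while the $g$-walk from $j$ needs at least two steps (as $ij\notin g$), so $ht_{g'}[j,i]<ht_g[j,i]$ strictly. Since $Conn(i)$ is unchanged and contains $j$, the denominator strictly decreases and the random walk closeness of $i$ strictly increases, which finishes the first clause.
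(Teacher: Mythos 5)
Your proof is correct. For three of the four claims (both clauses for closeness centrality, and the disconnected-components clause for random walk closeness) it coincides with the paper's argument: the distance sum strictly decreases because $d(i,j)$ drops from $\geq 2$ to $1$ while no other distance grows, and a bridge $ij$ leaves all distances (resp.\ all hitting times to $i$) inside $Conn_g(i)$ unchanged while adding new positive terms to the denominator. Where you genuinely diverge is the one nontrivial step, the strict decrease of $\sum_{k}ht[k,i]$ when $j\in Conn(i)$. The paper quotes Lov\'asz for the weak inequality $ht_{g+ij}[k,i]\leq ht_{g}[k,i]$ and then obtains strictness from an explicit formula expanding $\Pr[T_{k,i}>n]$ over trajectories of the walk on $g\setminus\{i\}$, weighted by factors of the form $(1-1/deg(\cdot))^{\ell}$ that record the conditioning on not escaping to $i$. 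Your pathwise coupling with an ``escape hatch'' at $j$ gives both the weak inequality for every $k$ and strictness at $k=j$ in one self-contained stroke: it needs neither the external citation nor the trajectory expansion, and it avoids a blemish in the paper's version (the factor $(1-1/deg(j))^{\ell}$ in the formula for the original graph $g$ should be $1$, since in $g$ the walk cannot step from $j$ to $i$ at all). You also rightly emphasize that monotonicity holds only because the added edge is incident to the absorbing vertex $i$, and you make explicit the isolated-vertex convention needed for the second clause, which the paper's proof passes over in silence.
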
 

The next axiom has a different flavor, and encode a scenario when agents benefit by connecting only when they were "of the same/different types". In our particular setting homophily is assessed with respect to agents' degree. 

\begin{axiom}
A centrality measure $C$ is \emph{degree homophilic} if the following is true: there exists a strictly increasing function $f$ such that for any network $g$ and edge $ij\not\in g$, adding edge $ij$ to $g$ is an improving move for $i$ iff $deg(j)\leq f(deg(i))$.
\label{axiom-ass} 
\end{axiom} 
An example of measure satisfying Axiom 3 is: 

\begin{theorem} 
Game-theoretic centrality satisfies Axiom~\ref{axiom-ass} with $f(0)=-1$, $f(n)=(n+1)(n+2)-3$, $n\geq 1$.   
\label{degree-seg} 
\end{theorem}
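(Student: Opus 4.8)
The plan is to compute directly the marginal change in game‑theoretic centrality caused by a single edge addition, using the closed form $C_{GT}[i]=\sum_{k\in\widehat{N(i)}}\frac{1}{deg(k)+1}$, and then to translate that computation into a statement about improving moves in the regime $\epsilon\to 0^{+}$.

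First I would fix a network $g$ and a non‑edge $ij\notin g$, write $d_i=deg(i)$ and $d_j=deg(j)$, and record the structural observation that adding $ij$ to $g$ changes the degree of exactly two vertices, namely $i$ and $j$, each increasing by one; moreover $j$ becomes a new element of $\widehat{N(i)}$, while every other member $k\in N(i)$ keeps both its membership in $\widehat{N(i)}$ and its degree (since $k\neq i,j$). Feeding this into the formula, the summand for $i$ itself changes from $\frac{1}{d_i+1}$ to $\frac{1}{d_i+2}$, a new summand $\frac{1}{d_j+2}$ appears for $j$, and all remaining summands are untouched, so
\[
C_{GT}[i,g+ij]-C_{GT}[i,g]=\left(\frac{1}{d_i+2}-\frac{1}{d_i+1}\right)+\frac{1}{d_j+2}=\frac{1}{d_j+2}-\frac{1}{(d_i+1)(d_i+2)}=:\Delta .
\]

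Next I would pass to utilities. Since adding $ij$ raises $deg(i)$ by one, the change in $i$'s utility in the model with edge cost $\epsilon$ is exactly $\Delta-\epsilon$; hence the edge addition strictly improves $i$'s utility for all sufficiently small $\epsilon>0$ if and only if $\Delta>0$. Now $\Delta>0\iff d_j+2<(d_i+1)(d_i+2)$, and because $d_j$ is a non‑negative integer this is equivalent to $d_j\le (d_i+1)(d_i+2)-3$. Setting $f(n)=(n+1)(n+2)-3$ — which gives $f(0)=-1$ and satisfies $f(n)-f(n-1)=2(n+1)>0$, so $f$ is strictly increasing on $\{0,1,2,\dots\}$ — yields exactly Axiom~\ref{axiom-ass}. (In particular, when $d_i=0$ the condition $d_j\le f(0)=-1$ is never met, which matches $\Delta=\frac{1}{d_j+2}-\frac12\le 0$.)

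The argument has no genuine obstacle; it is essentially a one‑line computation. The only points needing care are the bookkeeping — making sure the summand for $i$ itself changes while no summand for a third vertex does, which is where the hypothesis $ij\notin g$ is used — and the two conversions at the end: from the strict real inequality $\Delta>0$ to the integer inequality $d_j\le f(d_i)$, and from "$\Delta-\epsilon>0$ for all sufficiently small $\epsilon>0$" to "$\Delta>0$", so that the notion of improving move relevant to APSN lines up with the statement of the axiom.
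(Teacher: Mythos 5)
Your proposal is correct and follows essentially the same route as the paper's own proof: both compute the marginal change in $C_{GT}$ from a single edge addition as $\frac{1}{deg(j)+2}-\frac{1}{(deg(i)+1)(deg(i)+2)}$ and convert the strict positivity of this quantity, via integrality of degrees, into the condition $deg(j)\leq (deg(i)+1)(deg(i)+2)-3$. Your version is slightly more explicit about the bookkeeping of which summands change and about passing from ``improving for all sufficiently small $\epsilon$'' to strict positivity of the centrality increment, but the argument is the same.
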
 
\noindent Another axiom refers, among other things, to the role links among unrelated agents have on one agent's centrality: 

\begin{axiom}
Increasing centrality measure $C$ is called \emph{regular} if the following are true: 
\begin{itemize} 
\item[-] whenever $g$ is a network and $i$ an isolated node in $g$, $C[i,g]=0$. 
\item[-] whenever $i,j$ are nodes, $g$ is a network such that $ij\not\in g$ and $k\neq i,j$, we have 
$C[k,h+ij]\leq C[k,h]$ In other words the addition of an edge cannot increase the centrality of an unrelated node.  
\end{itemize} 
\label{axiom-ind} 
\end{axiom} 

\begin{example} 
Linear centralities are clearly regular.  
\end{example}

\begin{figure}
\begin{center}
\begin{tabular}{|c|c|c|}
\hline\hline
Centrality & Axm. & Reference \\
\hline\hline
Degree &  1,4 & trivial \\
\hline
Harmonic & 1 & \citet{boldi2014axioms}\\
\hline
Katz & 1 & trivial \\
\hline
Decay & 1 & trivial\\
\hline
Pagerank & 1 & \citet{chien2004link}  \\
 &  & \citet{boldi2017rank} \\
\hline
Closeness & 2 & Theorem~\ref{cc-axiom}
\\
\hline
r.w. Closeness & 2 & Theorem~\ref{cc-axiom}
\\
\hline
Game-theoretic & 3 & Theorem~\ref{degree-seg}\\
\hline
\end{tabular}
\end{center}
\caption{\label{table-centralities}Axioms for centrality measures.}   
\end{figure}

The table in Figure~\ref{table-centralities} displays a synthetic view of the axioms various centrality measures satisfy. We will say that a game satisfies a certain axiom if all the players satisfy it.  

Finally, many of our results (even theoretical ones) arise from implementing our model in Python using the \emph{networkx} package \cite{hagberg2008exploring} and performing computational experiments. We will provide a link to public versions of our programs in the non-anonymized version.

\section{APSN for Mixtures of Monotone Centralities}


In this section we consider mixtures of agents satisfying axioms $1,1^{\prime}$, and $2,2^{\prime}$. For latter types improving moves are clear by definition. The following result, whose proof is trivial,  characterizes them for the former types as well: 

\begin{lemma} Consider an agent $i$ whose utility function has the form in equation~(\ref{eq:utility}). The following is true: If centrality $C$ satisfies Axiom 1 then adding any missing edge is an improving move for $i$, while removing an edge is not improving, both statements being true for small enough cost $c$. Dually, if centrality $C$ satisfies Axiom $1^{\prime}$ then removing any edge is an improving move for $i$, while adding an edge is not improving, both statements being true for small enough cost $c$.  
 \label{pred-1}
\end{lemma}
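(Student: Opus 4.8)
The plan is to unwind the definitions and observe that this is a purely local, "one-move-at-a-time" statement about a single agent $i$, so nothing about the other agents' centralities or the global network structure enters. Fix a network $g$ and the agent $i$ with utility $u_i(h) = C_i(h) - c\cdot \deg_h(i)$, and recall that an edge flip at $i$ changes $\deg_h(i)$ by exactly $\pm 1$. The whole proof is a case analysis on the four possible edge flips involving $i$ (add an edge, remove an edge) crossed with the two axioms ($1$ and $1'$).

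First I would treat Axiom~1. Suppose $ij\notin g$ and let $h = g+ij$. By Axiom~1, $C_i(h) > C_i(g)$, so $\delta := C_i(h) - C_i(g) > 0$. Then $u_i(h) - u_i(g) = \delta - c$, which is strictly positive as soon as $c < \delta$; since there are only finitely many missing edges at $i$ in $g$ (at most $n-1$), taking $c$ below the minimum of the corresponding finitely many positive gaps $\delta$ makes every edge addition at $i$ strongly improving, hence in particular an improving move for $i$ in the sense of the model's definition (an edge addition strongly improving for one endpoint and at least weakly improving for both — here we only need the "$i$-side", which is what the lemma asserts). For edge removal under Axiom~1: if $ij\in g$ and $h = g - ij$, then $g = h + ij$ and Axiom~1 applied at $h$ gives $C_i(g) > C_i(h)$, i.e. removing the edge strictly decreases $i$'s centrality; so $u_i(h) - u_i(g) = (C_i(h) - C_i(g)) + c < 0$ once $c$ is smaller than the (finitely many) positive centrality drops, meaning removal is not strongly improving for $i$ and hence not an improving move. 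The dual case (Axiom~$1'$) is verbatim the same argument with all centrality inequalities reversed: removing an edge gains $i$ a positive amount of centrality (and saves cost), so it is strongly improving; adding an edge loses centrality, and for small $c$ this loss dominates the cost saving, so addition is not improving.

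The only mildly delicate point — and the reason the quantifier order in the statement "for small enough cost $c$" matters — is the passage from a per-edge threshold to a single uniform threshold. This is handled simply by noting that for a fixed finite network $g$ there are only finitely many edge flips at $i$, each contributing one strictly positive centrality gap, and the minimum of finitely many positive numbers is positive; one then also intersects with the analogous thresholds needed for the "non-improving" direction. I do not expect any real obstacle here; the lemma is flagged as having a trivial proof, and indeed the content is entirely bookkeeping with the definition of "improving move" plus the observation that a degree change of one unit costs $c$. I would close by remarking that the same finiteness observation is exactly what makes the notion of asymptotic pairwise stability well-posed, so this lemma is the natural first building block for the results that follow.
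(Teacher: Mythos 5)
Your argument is correct and is exactly the bookkeeping the paper has in mind: it declares this lemma's proof trivial and omits it, and the intended reasoning is precisely your observation that each edge flip changes $C_i$ by a strictly positive (or strictly negative) amount while changing the cost term by exactly $c$, so taking $c$ below the minimum of the finitely many positive centrality gaps settles all four cases. Nothing to add.
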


The next result characterizes stable networks as those whose connected components are (perhaps single node) complete graphs plus, maybe, one \emph{complex component}, which is not a complete graph. This complex component displays an extreme form of \emph{core-perifery structure}: it consists of a \emph{core},  a clique of nodes of type 1 and 2, and a \emph{periphery} consisting of nodes of type $2^{\prime}$ attached to core nodes only:  

\begin{theorem}
The APSN in the centrality model with agents satisfying one of Axioms 1,$1^{\prime}$, 2,$2^{\prime}$ are precisely those networks $g$ satisfying the following rules: 
\begin{itemize}
\item[-] All agents of type $1^{\prime}$ are isolated.
\item[-] There is a single connected component that contains all agents of type 1. Agents of type 1 and 2 in this component form a clique ("the core"). All agents of type $2^{\prime}$ belong to this component, and are pendant vertices attached to vertices of type 1 of the clique ("the periphery"). 

\item[-] All other connected components are complete graphs (including isolated nodes) containing agents of type 2 only. 
\end{itemize} 
\label{stable-mon}
\end{theorem}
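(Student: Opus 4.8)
The plan is to characterize the APSN by combining the per-type behavior of improving moves with a careful analysis of how the four agent types interact. By Lemma~\ref{pred-1}, for small enough $\epsilon$ an agent of type $1$ always benefits from adding any missing edge and never from removing one; an agent of type $1'$ always benefits from removing any edge and never from adding one; an agent of type $2$ benefits from adding an edge to a node in its own component but never from adding a bridge, and (dually) a $2$ agent never benefits from removing a non-bridge but does benefit from removing a bridge; and a $2'$ agent benefits from bridging two components and from removing any non-bridge, but never from adding a within-component edge or removing a bridge. I would first record these eight facts as a preliminary claim, each following from the relevant axiom ($1$, $1'$, $2$, $2'$) by the sign of the centrality difference together with the $-c\cdot deg(i)$ term for $c$ small; the finiteness of the set of $n$-vertex graphs gives a uniform $\epsilon_0$.

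Next I would argue necessity: suppose $g$ is APSN. First, no type-$1'$ agent can have an incident edge, since deleting it is improving for that endpoint; so all $1'$ agents are isolated. Second, consider the subgraph induced by the type-$1$ and type-$2$ agents that lie in one common component: any two type-$1$ agents must be adjacent (edge addition is improving for both), and a type-$1$ agent must be adjacent to every type-$2$ agent in its component (adding the edge is improving for the $1$ agent and, being within-component, weakly improving for the $2$ agent). More care is needed to show all type-$1$ agents lie in a \emph{single} component and that the type-$2$ agents adjacent to them form a clique with them: here one uses that a $2$ agent never severs a non-bridge, so once two type-$2$ core agents are both adjacent to the same type-$1$ agent they are in the same component, and then the "add within-component edge" move forces them adjacent. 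For the periphery, a type-$2'$ agent $v$ in the core component: $v$ cannot sit on any cycle (it would delete a non-bridge), so its incident edges are all bridges; if $v$ had degree $\geq 2$ it would be a cut vertex of the component, but then the two sides could merge across $v$ — I would need to check that this merge is realized by an improving move, which is where the bridge-adding behavior of a $2'$ agent, or the within-component edge addition of the relevant $1$ or $2$ agent, is invoked — forcing $\deg(v)=1$; and $v$'s unique neighbor must be type $1$ (if it were type $2$ or $2'$, deleting the bridge $v$–(neighbor) is improving for that neighbor). Finally, any component disjoint from all type-$1$ agents contains only type-$2$ agents (a $1'$ is isolated, a $2'$ in a nontrivial component with no type-$1$ neighbor has an incident bridge whose other endpoint would want it deleted, unless that endpoint is also $2'$, which recursively collapses), and such a component is a clique by the type-$2$ within-component rule.

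For sufficiency I would take any $g$ of the stated form and verify no edge flip is improving for small $\epsilon$: isolated $1'$ agents have no flip to make and no partner; within a clique component, adding an edge is impossible and deleting one is a non-bridge deletion, improving for no type present ($1$: never deletes; $2$: never deletes a non-bridge); adding a bridge between two components is not improving for any type-$1$ or type-$2$ endpoint ($1$: a within-component gain only when… — actually $1$ always gains, so one must be careful: a type-$1$ agent \emph{would} want to add any edge, hence the clique containing all type-$1$ agents must be a single component, which is exactly what the structure guarantees, so no \emph{missing} edge has a type-$1$ endpoint except those to the periphery, and a missing core–periphery edge $1$–$2'$: the $2'$ agent $v$ is within the component, so adding it is \emph{not} improving for $v$ by Axiom $2'$, blocking the move); a pendant $2'$ vertex deleting its bridge is not improving for it (Axiom $2'$ blocks bridge deletion) nor for its type-$1$ neighbor (Axiom $1$ blocks deletion); adding an edge from the periphery into the core or between two periphery vertices is within-component, blocked at the $2'$ endpoint.

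The main obstacle I expect is the periphery analysis in the necessity direction — specifically, proving that a type-$2'$ agent in the complex component must be a degree-one pendant attached to a \emph{type-1} core vertex, rather than, say, attached to a type-$2$ core vertex or forming a longer pendant path. This requires simultaneously tracking which agent at each end of a potential flip gains, and ruling out configurations where a local flip benefits no one even though the global structure is "wrong"; the merge-across-a-cut-vertex argument needs the right improving move exhibited, and the interaction between the "$2$ never deletes a non-bridge" rule and the "$2'$ always deletes a non-bridge" rule is what pins down that the core is exactly types $1,2$ and the periphery exactly type $2'$.
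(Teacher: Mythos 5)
Your overall route is the same as the paper's: classify improving moves type by type (your ``eight facts'' are Lemma~\ref{pred-1} together with a direct reading of Axioms 2 and $2'$ plus the small-$c$ argument), then read off the structure of an APSN and verify sufficiency. The sufficiency half and the core analysis (type-$1'$ agents isolated, all type-1 agents pairwise adjacent and hence in one component, type-2 agents of that component absorbed into the clique, all other components being type-2 cliques) are sound and match the paper, which is considerably terser.

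The gap is exactly where you predicted it, and it is fatal to the argument as you (and, for that matter, the paper) give it. Your claim that a pendant type-$2'$ vertex $v$ cannot be attached to a type-$2'$ neighbor because ``deleting the bridge $v$--(neighbor) is improving for that neighbor'' is false: by Axiom $2'$ a peripheral agent's centrality \emph{strictly drops} when an incident bridge is deleted (re-adding an edge across components strictly increases centrality), so for small $c$ neither endpoint of a bridge joining two $2'$ agents wants to delete it. Likewise your merge-across-a-cut-vertex move fails whenever a neighbor of the $2'$ cut vertex is itself of type $2'$: adding a within-component edge strictly lowers that neighbor's utility, so the addition is not weakly improving for both endpoints and is blocked. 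Concretely, take one type-1 agent $x$, two type-$2'$ agents $y,z$, and the path $x$--$y$--$z$: both deletions strictly hurt both endpoints, and the only missing edge $xz$ is within-component and hence strictly hurts $z$. This is an APSN in which $y$ has degree two and $z$'s unique neighbor is of type $2'$, so the necessity direction cannot be completed for the statement as written; one must either forbid or separately handle chains and trees of $2'$ agents hanging off the core, or strengthen Axiom $2'$. The paper's own proof silently skips this case (it only argues that a $2'$ agent's contact cannot be of type 2), so your proposal is no worse than the original, but the step you flagged as the main obstacle is indeed the point at which the proof breaks.
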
 
\begin{proof} Consider an APSN. 
All agents of type 1 must be connected in a clique, since joining them is an improving move for all. For agents of type 2 belonging to this component it is beneficial to connect to all nodes of type 1 (and among themselves), hence they are also part of the clique core.  Agents of type $2^{\prime}$ want to stay connected to the component, but only minimally: once they are connected to a node in the component, adding any extra edge is not improving for them. Their contacts must be of type 1: if they were of type 2 they'd benefit from severing the connection. 

All other components consist of agents of type 2 only, for which it is beneficial to fully connect. 
\end{proof}

\begin{figure}[h]
\begin{center}
\scalebox{0.450}{

\begin{tikzpicture}[shorten >=1pt, auto, node distance=3cm, ultra thick]
 	\tikzstyle{type1} = [circle, draw=red, fill=white!]    
    \tikzstyle{type11} = [circle, draw=blue, double, fill=white!]  
    \tikzstyle{type2} = [rectangle, draw=red, fill=white!]  
    \tikzstyle{type22} = [rectangle, draw=blue, double, fill=white!]  
    \tikzstyle{edge_style} = [draw=black, line width=2, ultra thick]
    
	\node[type2] (v1) at (-2,3) {1};    
    \node[type1][pin=50:{$x$}] (v2) at (0.5,0) {2};
    \node[type1] (v3) at (-4,1) {3};

    \node[type2] (v4) at (-2,-1) {4};
	\node[type1](v5) at (1,3) {5};

    \node[type2] (v9) at (-5,-1) {9};
    \node[type11] (v8) at (-1,5) {8};
    \node[type22] (v15) at (-5,3) {15};
    
    \node[type22] (v13) at (1, 5) {13};
     \node[type22] (v14) at (3, 3) {14};
    \node[type22] (v6) at (2, 4) {6};
    \node[type22] (v7) at (-6,1) {7};
    
    \node[type2] (v10) at (3,1) {10};
	\node[type2][pin=90:{$y$}] (v11) at (5,3) {11};
    \node[type2] (v12) at (5,-1) {12};

	\foreach \x [count=\xi from 2] in {1,...,4}{%
    	\foreach \y in {\xi,...,5}{
    		\draw[edge_style]  (v\x) edge (v\y);
        }
    }
    \foreach \x [count=\xi from 11] in {10,11}{%
    	\foreach \y in {\xi,...,12}{
    		\draw[edge_style]  (v\x) edge (v\y);
        }
    }
   	
   	\draw[edge_style]  (v5) edge (v6);
    \draw[edge_style]  (v3) edge (v7);
    \draw[edge_style] (v5) edge (v13); 
     \draw[edge_style] (v5) edge (v14);
     \draw[edge_style] (v3) edge (v15);
  
    \matrix [draw,below left] at (current bounding box.north west) {
  		\node [type1,label=right:$type~1$] {}; \\
  		\node [type11,label=right:$type~1'$] {}; \\
  		\node [type2,label=right:$type~2$] {}; \\
  		\node [type22,label=right:$type~2'$] {}; \\
	};
\end{tikzpicture}}
\end{center}
\caption{\label{fig:T3}APSN for mixtures of  monotone centralities.}
\end{figure}

As a corollary of the previous result, we can infer quite a lot about agent types from the structure of stable networks: 

\begin{theorem} 
Consider, in the setting of Theorem~\ref{stable-mon}, an asymptotically pairwise stable network $g$. Then: 
\begin{itemize} 
\item[a).] Isolated nodes are either of type $1^{\prime}$ or $2$. 
\item[b).] An agent $y$ in complete graph components of size $\geq 2$ may be of type $1$ or $2$. It is guaranteed to be of type 2 when there exists a complex component in $g$ which doesn't contain $y$, or when some agent $x$ in a different component than $y$ is known to be of type 1 (Figure~\ref{fig:T3}). 
\item[c).] In the complex component pendant vertices are of type $2^{\prime}$ (Figure~\ref{fig:T3}) and their unique neighbors are of type 1.  All other nodes  are of types 1 or 2. 
\item[d).] One cannot distinguish between centrality measures satisfying the same Axiom, nor between the multiple possible types of nodes in the same listing (a)-(c).
\end{itemize} 
\label{learn-mon}
\end{theorem}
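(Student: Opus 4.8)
The plan is to derive each claim directly from the structural characterization in Theorem~\ref{stable-mon}, reading it ``backwards'': given the shape of $g$, enumerate which node positions are consistent with which axiom type. First I would set up the argument by recalling the three-part description of an APSN $g$ from Theorem~\ref{stable-mon}: a collection of clique components containing only type-$2$ agents (isolated nodes included), plus possibly one complex component with a type-$1$/type-$2$ clique core and type-$2'$ pendant vertices attached to type-$1$ core nodes. Every one of parts (a)--(d) is then a case analysis over the possible ``slots'' a node can occupy in this picture.

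For part (a): an isolated node cannot be of type $1$ (adding any edge would be improving for it, by Lemma~\ref{pred-1}, contradicting stability — and moreover type-$1$ agents all lie in the complex component when it exists, and in a clique otherwise; an isolated node is never in a clique of size $\geq 2$) and cannot be of type $2'$ either, since a type-$2'$ agent wants to be connected to \emph{some} component (its centrality satisfies Axiom~$2'$, so a bridge to another component is strongly improving). Hence it is type $1'$ or type $2$, and both genuinely occur (a type-$1'$ node is always isolated; an isolated type-$2$ node is a valid clique component of size $1$). For part (b): a node $y$ in a clique component of size $\geq 2$ is, by the theorem, either a core node of the complex component (possible types $1$ or $2$) or a node of a pure-type-$2$ clique component (type $2$ only). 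If a \emph{separate} complex component exists not containing $y$, then since \emph{all} type-$1$ agents lie in the (unique) complex component, $y$ cannot be type $1$, forcing type $2$; likewise if some $x$ in another component is known to be type $1$, then $x$ sits in the complex component, $y$ does not, so again $y$ is type $2$. Absent such witnesses, $y$'s component could itself be the complex component's core (so $y$ could be type $1$) or a pure clique (type $2$) — both are realizable, so no finer inference is possible.

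For part (c): within the complex component, a pendant vertex $v$ (degree $1$) must be type $2'$ — it cannot be type $1$ or $2$, since for those adding an edge to another core node is improving (Axiom~$1$, or Axiom~$2$ within the same component), violating stability; and the unique neighbor of $v$ must be type $1$, because the theorem states pendant type-$2'$ vertices attach only to type-$1$ core nodes (equivalently: if the neighbor were type $2$, it would strictly benefit from \emph{severing} the pendant edge, as its centrality satisfies Axiom~$2$ and $v$ contributes nothing but a cost). Any non-pendant node of the complex component is in the clique core, hence type $1$ or $2$. For part (d): this is an impossibility/indistinguishability claim — I would argue that the characterization in Theorem~\ref{stable-mon} depends on a centrality measure \emph{only through which of the four axioms it satisfies}, so two measures obeying the same axiom induce literally the same family of APSN; and within any one of the listings (a), (b), (c) the stated alternative types are each realized by \emph{some} APSN on the same graph (e.g., relabel a pure-$2$ clique's agents as the core of a complex component, or swap a type-$1'$ isolated node for a type-$2$ one), so no algorithm taking the set of stable networks as input can separate them.

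The main obstacle I anticipate is part (d), and specifically making the ``cannot distinguish'' statement precise: it requires exhibiting, for each ambiguous slot, two concrete type-assignments that yield the exact same stable network (or set of stable networks), which in turn needs concrete witness centralities for each axiom — here I would invoke the examples already tabulated in Figure~\ref{table-centralities} (harmonic/Katz/decay for Axiom~$1$, closeness and r.w.\ closeness for Axiom~$2$ via Theorem~\ref{cc-axiom}, game-theoretic for Axiom~$3$, and an obvious strictly-decreasing transform for $1'$, $2'$) to confirm none of the type classes is empty. The other parts are essentially bookkeeping on top of Theorem~\ref{stable-mon} and Lemma~\ref{pred-1}.
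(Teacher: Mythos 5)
Your proposal is correct and takes the route the paper intends: the paper states this result as an immediate corollary of Theorem~\ref{stable-mon} and supplies no explicit proof (it is not among the deferred proofs in the appendix), so your case analysis over the node ``slots'' of that characterization is exactly the argument being left implicit. One caution on part (a): your first justification --- that an isolated type-$1$ node would add an edge ``by Lemma~\ref{pred-1}, contradicting stability'' --- is not valid on its own, because an edge addition is an improving move only if it is at least weakly improving for \emph{both} endpoints, and a type-$2$ or type-$1'$ partner in another component would refuse; the load-bearing justification is the second one you give, namely that all type-$1$ agents must sit together in the clique core of the distinguished component (which, like the paper's own statement, quietly sets aside the degenerate configuration of a unique type-$1$ agent with no type-$2'$ agents present).
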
 

\section{Degree-homophily yields rich-club APSN} 

Next we study centrality games for degree-homophilic centrality measures. The following result shows that  in this case APSN have a "rich club", hierarchical structure: 

\begin{theorem} 
Let $h$ be an APSN for the centrality game with degree homophilic centralities with function $f(\cdot)$ which satisfies $f(0)=-1$ and $f(x)\geq x$ for every $x\geq 1$. 

 Let $m$ be the maximum degree of a node in $h$. Let $n_{1}^{*}=min\{k:f(k)\geq m\}$ and, for $i\geq 2$, $n_{i}^{*}=min\{r:f(r)\geq n_{i-1}^{*}\}$. Clearly $n_{1}^{*}\geq n_{2}^{*}\geq \ldots$ (and one can assume w.l.o.g., by removing multiple copies of the same value, that $n_{1}^{*}>  \ldots > n^{*}_{r}=1$ for some $r\geq 1$)
\begin{itemize} 
\item[a).] If $deg(i),deg(j)\geq n_{1}^{*}$ then $ij\in E(h)$. 
\item[b).] If $deg(i),deg(j)\in [n_{k}^{*},n_{k-1}^{*}]$ for some $k\geq 2$ then $ij \in E(h)$ ("alike nodes connect to each other")
\item[c).] If $k\geq 2$, $deg(i)\leq  n_{k}^{*}$, $deg(j)> n_{k-1}^{*}$ then $ij\not \in E(h)$. 
\end{itemize} 
\label{deghom} 
\end{theorem}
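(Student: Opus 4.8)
The plan is to distill the APSN hypothesis into two numerical conditions on the degree sequence of $h$ via Axiom~\ref{axiom-ass}, and then read off (a)--(c) by monotonicity bookkeeping with the thresholds $n_k^*$. Write $d_i:=deg_h(i)$ and, using asymptotic stability, fix one cost $\epsilon\in(0,\epsilon_0)$ for which $h$ is pairwise stable (harmless, since $h$ is finite). For a non-edge $ij\notin h$, Axiom~\ref{axiom-ass} (applied with $g=h$) says that for small $\epsilon$ adding $ij$ strictly raises $i$'s centrality iff $d_j\le f(d_i)$, and strictly raises $j$'s centrality iff $d_i\le f(d_j)$; an edge addition is an improving move precisely when both endpoints strictly gain, so pairwise stability of $h$ forces
\begin{equation}
  ij\notin h \;\Longrightarrow\; d_j>f(d_i)\ \text{ or }\ d_i>f(d_j). \tag{$\mathrm A$}
\end{equation}
For an existing edge $ij\in h$, apply Axiom~\ref{axiom-ass} to $g=h-ij$, in which $i$ and $j$ have degrees $d_i-1$ and $d_j-1$: reinserting $ij$ strictly raises $C_i$ iff $d_j-1\le f(d_i-1)$, and for small $\epsilon$ the negation of this is exactly the statement that endpoint $i$ strictly profits from deleting $ij$ from $h$ (with the borderline case "deletion leaves $C_i$ unchanged" still counting as a profit, since $i$ saves the cost $\epsilon$). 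A deletion is improving as soon as one endpoint profits, so pairwise stability forces
\begin{equation}
  ij\in h \;\Longrightarrow\; d_j-1\le f(d_i-1)\ \text{ and }\ d_i-1\le f(d_j-1). \tag{$\mathrm D$}
\end{equation}

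Set $n_0^*:=m$, so that $n_k^*=\min\{r:f(r)\ge n_{k-1}^*\}$ for all $k\ge1$; this formula persists for the deduplicated chain $n_1^*>\dots>n_r^*=1$ of the statement, and by minimality of the defining minimum we get $f(n_k^*)\ge n_{k-1}^*$ while $f(n_k^*-1)<n_{k-1}^*$. (Here $f(x)\ge x$ for $x\ge1$ is what makes each $n_k^*$ finite and the chain weakly decreasing, and $f(0)=-1$ is what pushes it down to $1$.) Parts (a) and (b) are then the single claim "$d_i,d_j\in[n_k^*,n_{k-1}^*]\Rightarrow ij\in E(h)$", for $k=1$ and $k\ge2$ respectively: if $ij\notin h$ then from $d_i\ge n_k^*$ and monotonicity of $f$ we get $f(d_i)\ge f(n_k^*)\ge n_{k-1}^*\ge d_j$, so $d_j>f(d_i)$ is false, and symmetrically (using $d_j\ge n_k^*$, $d_i\le n_{k-1}^*$) $d_i>f(d_j)$ is false, contradicting $(\mathrm A)$. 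For part (c), suppose $k\ge2$, $d_i\le n_k^*$, $d_j>n_{k-1}^*$, and, for contradiction, $ij\in h$; then $d_i-1\le n_k^*-1$, so $f(d_i-1)\le f(n_k^*-1)<n_{k-1}^*$, and the first inequality in $(\mathrm D)$ gives $d_j-1\le f(d_i-1)<n_{k-1}^*$, whence $d_j\le n_{k-1}^*$ by integrality — contradicting $d_j>n_{k-1}^*$. Hence $ij\notin E(h)$.

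The only step requiring genuine care is the derivation of $(\mathrm D)$: Axiom~\ref{axiom-ass} is stated only for edge additions, so one must verify its correct "dual" form for deletions — namely that, for all sufficiently small $\epsilon$, the failure of $d_j-1\le f(d_i-1)$ is precisely the statement that removing $ij$ strictly improves endpoint $i$'s payoff, handling the degenerate "$C_i$ unchanged by removal" case correctly (it is still a strict payoff gain because of the saved edge cost). Once $(\mathrm A)$ and $(\mathrm D)$ are in place, everything else is the monotonicity bookkeeping above, which uses only $f(n_k^*)\ge n_{k-1}^*$, $f(n_k^*-1)<n_{k-1}^*$, and $f$ increasing.
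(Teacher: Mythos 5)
Your proof is correct and follows essentially the same route as the paper's: translate pairwise stability through Axiom~\ref{axiom-ass} into degree conditions for non-edges (your $(\mathrm A)$) and edges (your $(\mathrm D)$, which the paper states explicitly only later, as the auxiliary lemma in the proof of Corollary~\ref{rich-thm}), then conclude by monotonicity of $f$ and the defining inequalities $f(n_k^*)\ge n_{k-1}^*$, $f(n_k^*-1)<n_{k-1}^*$. Your write-up is in fact more careful than the paper's on the deletion case (c), where the paper's argument silently swaps the roles of $i$ and $j$.
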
 
\begin{corollary} 
Let $a_{1}>a_{2}> \ldots > a_{p}> a_{p+1}=1$ be a sequence of integers such that $a_{i}-1 > f(a_{i+1}-1)$ for all $i=1,\ldots, p$. Then all graphs of type 
$K_{a_{1}}+K_{a_{2}}+\ldots + K_{a_{p}}+rK_{0}$  ("stratified clique graphs") are APSN for degree homophilic centrality games with function $f$ and conversely, all APSN are unions of cliques with this structure. 
\label{rich-thm} 
\end{corollary}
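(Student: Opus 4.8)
\medskip
\noindent\textbf{Proof plan.} The statement is an equivalence; the plan is to prove ``stratified clique graphs are APSN'' by a direct check of pairwise stability, and the converse by induction on the number of vertices, peeling off the clique of maximum degree. For the ``$\Leftarrow$'' direction, fix $g=K_{a_1}+\ldots+K_{a_p}+rK_0$ satisfying the gap condition and a small cost $c>0$; a vertex of $K_{a_i}$ has degree $a_i-1$ and every $rK_0$ vertex has degree $0$. I would check that no edge flip is improving. A missing edge $uv$ either has an endpoint $v$ of degree $0$, and then $f(0)=-1<\deg(u)$ makes the addition non-improving for $v$ by Axiom~\ref{axiom-ass}; or it joins $K_{a_i}$ to $K_{a_j}$ with $i<j$, and then $\deg(u)=a_i-1>f(a_{i+1}-1)\ge f(a_j-1)=f(\deg(v))$ (using that $f$ is increasing and $a_j\le a_{i+1}$), again non-improving for $v$. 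A present edge $uv$ lies inside some $K_{a_i}$; after deleting it both endpoints have degree $a_i-2$, and for $a_i\ge 3$ this is $\ge 1$, so $f(a_i-2)\ge a_i-2$ and re-adding $uv$ to $g-uv$ would strictly raise each endpoint's centrality, i.e.\ for small $c$ the deletion strictly lowers their utilities, so it is not improving. The smallest clique $K_{a_p}=K_2$ is the one borderline case and must be inspected directly against $f(0)=-1$. Since only finitely many graphs and flips are involved, a uniform $\epsilon_0$ results, so $g$ is APSN.

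For the converse, let $h$ be APSN with maximum degree $m$ (if $m=0$ then $h=rK_0$ and we are done). Two vertices $u,v$ of degree $m$ must be adjacent --- by Theorem~\ref{deghom}(a), or directly since $\deg(v)=m\le f(m)=f(\deg(u))$ as $f(x)\ge x$, so a missing $uv$ would be an improving addition --- hence the degree-$m$ vertices form a clique $Q$. I would then show $Q$ emits no edge: a neighbour $x\notin Q$ of $v\in Q$ would, by non-deletability of $vx$, satisfy $m-1\le f(\deg(x)-1)$ via Axiom~\ref{axiom-ass}, while $\deg(x)<m$ together with the definition $n_{1}^{*}=\min\{k:f(k)\ge m\}$ and the strict monotonicity of $f$ (on integer arguments we may take $f$ integer-valued, so $f(n)\ge f(n-1)+1$) forces $\deg(x)\ge n_{1}^{*}$, hence $\deg(x)=m$ and $x\in Q$, a contradiction. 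So $Q=K_{m+1}$ is a connected component. Deleting it leaves a network $h'$ that is again APSN (any flip avoiding $Q$ has the same status in $h$ and in $h'$, by Axiom~\ref{axiom-ass}), so by induction $h'=K_{b_1}+\ldots+K_{b_t}+r'K_0$ with $b_1>\ldots>b_t>1$ and the gap condition. Since $b_1=(\text{max degree of }h')+1\le m<m+1$, prepending $a_1:=m+1$ keeps the sizes strictly decreasing, and non-addability in $h$ of an edge between $Q$ and the largest clique of $h'$ gives, via Axiom~\ref{axiom-ass} applied at the lower-degree endpoint, exactly $m>f(b_1-1)$, i.e.\ $a_1-1>f(a_2-1)$ --- the one new gap inequality. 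Isolated vertices of $h'$ stay isolated because $f(0)=-1$ makes every addition to a degree-$0$ vertex non-improving. This exhibits $h$ in the required form.

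I expect the main obstacle to be the step showing that the maximum-degree clique $Q$ of an APSN emits no edge: this is precisely where the hypotheses $f(0)=-1$, $f(x)\ge x$, and the strict monotonicity of $f$ must all be combined, and it is also what makes the induction close up into the clean inequality $a_i-1>f(a_{i+1}-1)$. A secondary nuisance is the borderline component $K_2$ in the ``$\Leftarrow$'' direction, which has to be checked by hand against $f(0)=-1$ (indeed with $f(0)=-1$ a lone edge prefers to dissolve, so some attention to the smallest clique is unavoidable).
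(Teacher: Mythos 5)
Your plan follows essentially the same route as the paper's proof in both directions: the forward direction is the same direct verification via Axiom~\ref{axiom-ass} (your ``a deletion is improving iff re-adding the edge to $g-uv$ is not improving for some endpoint'' is exactly the paper's Lemma~\ref{aux}), and the converse is the same induction that peels off the maximum-degree clique using parts (a)--(c) of Theorem~\ref{deghom}. One presentational gap in the converse: the step ``hence $\deg(x)=m$'' does not follow from $\deg(x)\ge n_1^*$ alone. You need the intermediate observation, which is how the paper argues, that the clique $A$ of \emph{all} vertices of degree $\ge n_1^*$ sends no edge outside $A$: look first at a maximum-degree vertex $z\in A$, rule out neighbours of $z$ outside $A$ (their degree $d$ satisfies $f(d-1)\le f(n_1^*-2)<m-1$, so the edge would be deleted), conclude $m=|A|-1$, and only then that every $w\in A$ has degree exactly $m$ and no neighbour outside $A$. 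This is patchable and does not change the architecture.

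The more substantive point is the ``borderline case'' $K_2$ that you defer: your own parenthetical remark is correct, and it shows the forward direction actually \emph{fails} there rather than needing a hand check. With $f(0)=-1$, re-adding the deleted lone edge of a $K_2$ component is not improving for either endpoint (since $0>f(0)$), so by your own deletion criterion removing that edge is strongly improving for both; concretely, for game-theoretic centrality the Shapley value of each endpoint is $1$ both before and after the deletion, while each saves the cost $c$. Hence a stratified clique graph with $a_p=2$ is not an APSN, even though $a_p=2$ is permitted by the hypothesis $a_p>a_{p+1}=1$. The paper's proof slips at exactly this point: it verifies $a_r-1\le f(a_r-1)$, whereas Lemma~\ref{aux} requires the post-deletion degrees, i.e.\ $a_r-2\le f(a_r-2)$, which fails for $a_r=2$. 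So you have not so much left a gap as located one in the statement. To close your proof you should either strengthen the hypothesis to $a_p\ge 3$ (this is consistent with the converse, since the same computation shows an APSN can have no $K_2$ component, so the induction only ever produces non-singleton cliques of size at least $3$) or explicitly record that the corollary requires this correction.
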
 
\begin{proof} 
We use the definition of degree homophily: 
\begin{itemize} 
\item[a).] Since $deg(i)\geq n_{1}^{*}$ and $f$ is monotonic, $f(deg(i))\geq f(n_1^{*})\geq m\geq deg(j)$, and similarly $f(deg(j))\geq deg(i)$. If $i,j$ were not connected, then adding $ij$ would be an improving move for both of them. 
\item[b).] Similar to  (a): Since $deg(i)\geq n_{k}^{*}$ and $f$ is monotonic, $f(deg(i))\geq f(n^{*}_{k}) = n^{*}_{k-1}\geq  deg(j)$, so $f(deg(i))\geq deg(j)$, and similarly $f(deg(i))\geq deg(j)$. If $i,j$ were not connected, then adding $ij$ would be an improving move. 

\item[c).] 

We have $deg(i)\leq n_{k}^{*}$ so $f(deg(i)-1)\leq f(n_{k}^{*}-1)<n_{k-1}^{*}\leq deg(j)-1$.  In conclusion, $deg(j)-1> f(deg(i)-1)$, which means that removing edge $ij$ is an improving move for $j$, since in the graph $h=g-ij$ adding edge $ij$ is not an improving move for $j$.  
\end{itemize} 

For a proof of the corollary, see the Technical Appendix. 
\end{proof}

\section{Domination and APSN for Betwenness Games} 

In this section we characterize APSN for centrality games for betweenness centrality, a measure that satisfies none of the previous axioms. First, simple computations provide examples of APSN with components that are not complete graphs: networks $C_{4}+nK_{1}$, $n\geq 0$.  We will show that the domination relation plays a decisive role in their characterization of APSN. To accomplish this, we first prove: 

\begin{lemma} The following statements are true: 
\begin{itemize} 
\item[-] Adding any bridge edge $ij$ weakly increases $i$'s betweenness centrality, strictly  unless $i$ was an isolated node. Consequently adding a bridge edge is improving for $i$, unless $i$ was isolated. Conversely, a disconnecting edge removal would be improving for $i$ iff $i$ was a pendant node. 
\item[-] Adding any non-bridge edge $ij$  weakly increases $i$'s betweenness centrality. 
\end{itemize} 
\label{bet} 
\end{lemma}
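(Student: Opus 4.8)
Both items reduce to a single observation about how each summand $\sigma_{st}(i)/\sigma_{st}$ of $C_{between}[i]$ behaves when a missing edge $ij$ is inserted, so I would prove that per-pair statement first and then read off the two bullets, handling strictness and the cost term separately.

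\emph{Step 1: the per-pair monotonicity.} Fix a pair $s,t$ with $s\ne i\ne t$, and let $d,d'$ denote the $s$--$t$ distance in $g$ and in $g+ij$ (with $d=\infty$ if $s,t$ are disconnected in $g$). The key fact I would use is that every shortest $s$--$t$ path that exists in $g+ij$ but not in $g$ traverses the edge $ij$, hence visits the vertex $i$. If $d'<d$ (this includes the case $s\in Conn_g(i)$, $t\in Conn_g(j)$ when $ij$ is a bridge), then \emph{every} shortest $s$--$t$ path in $g+ij$ uses $ij$, since any path avoiding $ij$ stays inside $g$ and has length $\ge d>d'$; hence $\sigma^{g+ij}_{st}(i)=\sigma^{g+ij}_{st}$ and the summand equals $1$, at least its old value. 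If $d'=d$, the old shortest paths survive and stay shortest, so $\sigma^{g+ij}_{st}=\sigma^{g}_{st}+p$ and $\sigma^{g+ij}_{st}(i)=\sigma^{g}_{st}(i)+p$ for some $p\ge 0$ (the new shortest paths, all through $i$); since $\sigma^{g}_{st}(i)\le\sigma^{g}_{st}$ and $a/b\le(a+p)/(b+p)$ whenever $0\le a\le b$ and $b>0$, the summand weakly increases. Summing over all pairs shows $C_{between}[i]$ weakly increases whenever any missing edge incident to $i$ is added --- which already gives the second bullet, and the ``weakly'' clause of the first.

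\emph{Step 2: strictness and the improving-move claim for bridges.} Suppose $ij$ is a bridge of $g+ij$, so $i,j$ lie in distinct components of $g$. For pairs with both endpoints in $Conn_g(i)$ (or both in $Conn_g(j)$) a detour through the other component is never shortest, so those summands do not move; for a pair with $s\in Conn_g(i)\setminus\{i\}$ and $t\in Conn_g(j)$ the summand jumps from $0$ to $1$. Thus $C_{between}[i,g+ij]-C_{between}[i,g]$ is exactly the number of such newly connected pairs --- an integer that is positive iff $Conn_g(i)$ contains a vertex other than $i$, i.e.\ iff $i$ is not isolated. Being a positive integer when positive, it is $\ge 1$; so for any edge cost $c<1$ (in particular all small enough $c>0$), $u_i(g+ij)-u_i(g)$ is positive precisely when $i$ is not isolated, whereas if $i$ is isolated the centrality is unchanged and $u_i$ decreases by $c$.

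\emph{Step 3: the pendant converse, and the main obstacle.} For the last sentence of bullet one I would run Step 2 in reverse: deleting a disconnecting edge $ij\in g$ splits $Conn_g(i)$ into $A\ni i$ and $B\ni j$, and the same count shows $C_{between}[i,g]-C_{between}[i,g-ij]$ equals the number of pairs with one endpoint in $A\setminus\{i\}$ and one in $B$; this vanishes iff $A=\{i\}$, i.e.\ iff $i$ is pendant. If $i$ is pendant its betweenness is $0$ before and after, so deleting $ij$ raises $u_i$ by $c>0$ (improving); otherwise the centrality loss is an integer $\ge 1>c$ for small $c$, so the deletion is not improving. The one point that genuinely needs care is the case $d'=d$ in Step 1: one must verify both that no shortest $s$--$t$ path through $i$ is destroyed and that every new shortest path visits $i$, so numerator and denominator grow by the \emph{same} $p$. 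That holds because adding an edge never deletes a path and a path using $ij$ must pass through $i$; it is worth emphasizing, because the analogous statement for a bystander vertex $k\notin\{i,j\}$ is false --- inserting $ij$ can reroute shortest paths off $k$ (e.g.\ connecting two leaves of a star).
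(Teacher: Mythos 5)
Your proof is correct and follows essentially the same route as the paper's: the same case split on whether the $s$--$t$ distance drops (all new shortest paths use $ij$, so the summand becomes $1$) or stays the same (numerator and denominator grow by the same $p$, and $\frac{a+p}{b+p}\geq\frac{a}{b}$), together with the same count of newly connected pairs to get the integral, hence $\geq 1$, gain in the bridge case. You are somewhat more explicit than the paper on the pendant-node converse and on why the bystander argument fails for $k\notin\{i,j\}$, but these are refinements of the same argument rather than a different approach.
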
 

We now prove the following result, which gives an unexpected (and fairly elegant) algorithmic characterization of APSN for betweenness games using the domination relation: 

\begin{theorem} 
Graphs $g$ that are APSN for betweenness centrality games consist of isolated vertices plus at most one connected component $C$ with at least two vertices which satisfies the following condition: $deg(l)\geq 2$ for every $l\in C$, $diam(C)=2$ and for every $i\neq j\in C$, $ij\in E(g)$ \textbf{if and only if} sets $N(i)\setminus \{j\}$ and $N(j)\setminus \{i\}$ are incomparable, i.e. if \textbf{none of $i,j$ dominates the other.}
\label{bet-thm} 
\end{theorem}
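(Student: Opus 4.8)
The plan is to recast asymptotic pairwise stability in purely betweenness‑theoretic terms and then extract the structure of $g$ in two stages — first the coarse shape, then the edge‑by‑edge domination criterion. Since there are finitely many graphs on $n$ vertices, one fixes a single $\epsilon_0$ below which the sign of every relevant utility difference is governed by the corresponding betweenness difference; concretely, for $\epsilon<\epsilon_0$ an edge addition $ij$ is improving iff $C_{between}$ strictly increases at \emph{both} $i$ and $j$, and an edge deletion $ij$ is improving iff $C_{between}$ fails to strictly decrease at \emph{some} endpoint. Armed with this and Lemma~\ref{bet}, the coarse structure is immediate: a bridge joining two components each of size $\ge 2$ strictly raises betweenness at both endpoints, so at most one component is non‑trivial; a bridge from an isolated vertex never raises its (zero) betweenness, so isolated vertices stay isolated; a disconnecting deletion at a pendant vertex leaves its betweenness at $0$, hence is strongly improving, so the non‑trivial component $C$ has minimum degree $\ge 2$; deleting any edge of $K_m$ leaves all betweennesses at $0$, so $C$ is not complete; and if $C$ contained vertices at distance $\ge 3$, adding that (non‑bridge) edge would put each of them on a new length‑two geodesic, strictly raising both betweennesses — so $\mathrm{diam}(C)=2$.

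The computational core is the identity that in any connected graph of diameter $2$ one has $C_{between}[k]=\sum\frac{1}{|N(s)\cap N(t)|}$, the sum over unordered non‑adjacent pairs $\{s,t\}$ with $k\in N(s)\cap N(t)$ (the paper's ordered convention merely doubles this). Flipping an edge $ij$ perturbs only the terms involving $i$ or $j$, and a short computation gives, for $ij\notin E(g)$ (so $d(i,j)=2$ and the diameter stays $\le 2$ after the addition), that adding $ij$ changes $i$'s betweenness by exactly $\sum_{s\in N(i)\setminus(N(j)\cup\{j\})}\frac{1}{|N(s)\cap N(j)|+1}$; this is positive precisely when $N(i)\not\subseteq N(j)\cup\{j\}$, i.e. precisely when $j$ does not dominate $i$. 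Hence adding a missing edge of $C$ is improving iff neither endpoint dominates the other. Running this in both directions settles the non‑edge half of the criterion: in an APSN every non‑edge of $C$ has one endpoint dominating the other, and conversely a graph in which every non‑edge of $C$ has this property admits no improving addition.

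The deletion half is where I expect the real work to sit. First, $C$ has no bridge (a bridge together with $\mathrm{diam}=2$ and minimum degree $\ge 2$ is contradictory, since choosing a non‑endpoint on each side forces a pair at distance $\ge 3$), so every edge of $C$ is non‑bridge and, by Lemma~\ref{bet}, deleting it weakly decreases the endpoints' betweennesses; I must show this is strict exactly when neither endpoint dominates the other. One direction is clean: if $i$ dominates $j$, then every pair not containing $j$ stays at distance $\le 2$ after deleting $ij$ and $j$'s common‑neighbour status for each such pair is unchanged, so $j$'s betweenness is exactly preserved and the deletion is strongly improving for $j$ — so an APSN has no edge with a dominating endpoint. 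The other direction is the obstacle: when neither dominates, I would fix $p\in N(i)\setminus(N(j)\cup\{j\})$ (non‑empty since $j$ does not dominate $i$) and show $i$'s contribution from the pair $\{p,j\}$ strictly drops under the deletion — immediately if $p$ and $j$ have $\ge 2$ common neighbours (the contribution falls to $0$), and otherwise, when $i$ is their unique common neighbour, by exhibiting in $g-ij$ a length‑three $p$–$j$ geodesic that bypasses $i$, built from a neighbour $t$ of $j$ with $t\notin N(i)$ (such a $t$ exists precisely because $i$ does not dominate $j$) together with a common neighbour of $p$ and $t$. The care needed here — that $g-ij$ may now have diameter $3$ so the closed form no longer applies, that distances among vertices other than $i,j$ are nevertheless untouched, and that $i$'s betweenness change can be tracked one pair $\{p,j\}$ at a time — is, I expect, the bulk of the argument; no single step is deep, but the bookkeeping must be exact. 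Combining the non‑edge analysis, the deletion analysis, and the coarse structure then yields both implications, including that a graph of the stated shape has no improving move of any kind and is therefore APSN.
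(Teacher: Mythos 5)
Your proof is correct and follows the same overall strategy as the paper's: Lemma~\ref{bet} supplies the coarse structure (a single non-trivial component, minimum degree $2$, diameter $2$, no bridges), and the domination criterion is then read off from length-two geodesics. Two points of divergence are worth recording. First, you replace the paper's qualitative tracking of appearing and disappearing geodesics by the closed form $C_{between}[k]=\sum 1/|N(s)\cap N(t)|$ valid in diameter-$2$ graphs, which turns the non-edge half into an exact identity for the betweenness increment; this is a clean quantitative reformulation of the same idea rather than a different argument, and your verification that only the pairs $\{j,s\}$ with $s\in N(i)\setminus(N(j)\cup\{j\})$ are affected is exactly what the paper's shortest-path bookkeeping amounts to. Second, and more substantively, your deletion analysis is more complete than the printed one. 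In the direction showing that graphs satisfying the condition are APSN, the paper takes $k\in N(i)\setminus N(j)$ and asserts that the disappearance of the geodesic $[k;i;j]$ decreases $i$'s betweenness; this is immediate when $k$ and $j$ have a second common neighbour (the fraction drops to $0$), but when $i$ is their unique common neighbour the pair $\{k,j\}$ moves to distance at least $3$ in $g-ij$ and one must still rule out that $i$ lies on \emph{all} of the new, longer $k$--$j$ geodesics, in which case the fraction would remain $1$. Your construction of a length-three $k$--$j$ path avoiding $i$, built from a vertex $t\in N(j)\setminus(N(i)\cup\{i\})$ supplied by the other half of the non-domination hypothesis together with a common neighbour of $k$ and $t$, is precisely what closes this case, and it is a genuine gain in rigour over the paper's version; the rest of your per-pair bookkeeping is justified by the pairwise monotonicity established in the proof of Lemma~\ref{bet}.
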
 
\begin{proof} 

It is easy to see that the networks that satisfy the condition of Theorem~\ref{bet-thm} are APSN: first, by Lemma~\ref{bet} isolated vertices have no incentive to connect to anyone else, as their utility would decrease, and can be subsequently ignored.  Consider, two vertices $i,j$ in a larger component $C$. 
If $ij\in E(g)$ then, by the condition of the theorem, there exist vertices $k\in N(i)\setminus N(j)$ and $l\in N(j)\setminus N(i)$. Since $d(k,j)\leq 2$ it follows that $[k;i;j]$ is a shortest path between $k$ and $j$ that would disappear if we dropped edge $ij$, decreasing the betweenness centrality of $i$ and ultimately its utility. Similar arguments hold for $j$. So if we dropped edge $ij$ the utility of both $i,j$ would decrease, hence dropping $ij$ it is \textbf{not} an improving move.  On the other hand if $ij\not \in E(g)$ then $N(i)\setminus \{j\}$ and $N(j)\setminus \{i\}$ are comparable. Assume w.l.o.g. that $N(i)\setminus \{j\}\subseteq N(j)\setminus \{i\}$. Then every shortest path  $[s;i;t]$ stays a shortest path when we add edge $ij$: This is clear when $s,t\neq j$, while cases $s=j$ and $t=j$ cannot happen, since $ij\not\in E(g)$. Adding edge $ij$ creates no new shortest paths: Indeed, assume $[s;i;j]$ were a newly created shortest path through $i$. Since $N(i)\setminus \{j\}\subseteq N(j)\setminus \{i\}$, $sj\in E(g)$, a contradiction. Since the set of shortest paths through $i$ stays the same, adding $ij$ is not an improving move. 

Let us now prove the converse direction, that APSN satisfy the conditions in the theorem. A first thing to prove is that any APSN has at most one component with $\geq 2$ vertices. Indeed, if there existed two such connected components then, by Lemma~\ref{bet}, joining them by an edge $ij$ would be an improving move for both $i,j$.  Second, we claim that this nontrivial component has diameter 2: indeed, it cannot have diameter 1, as complete graphs are not APSN. Assume it contained a shortest path of length 3 $[p;q;r;s]$. Then $p,s$ would increase their utility by connecting since, e.g., now there is a shortest path from $q$ to $s$  through $p$. Third, this component has no pendant nodes: any such node would have, by Lemma~\ref{bet}, an incentive to disconnect. 

To prove the domination condition, consider a connected APSN $g$ and a pair $ij\not \in g$, and assume w.l.o.g. that adding edge $ij$ decreases the utility of $i$ for small $\epsilon >0$, so that the move is not improving. Hence 
adding $ij$ does not increase the betweenness of $i$. Paths that contributed positively to the betweenness of $i$ \textbf{before adding $ij$} are between unconnected nodes $k_1,k_2\in N(i)$, so that a shortest path between $k_1,k_2$ goes through $i$. Then adding edge $ij$ does not change the ratio corresponding to $k_1,k_2$ in the betweenness of $i$. 
Consider now the shortest paths between $k_1\in N(i)\setminus \{j\}$ and $j$. Since the betweenness of $i$ does not increase as a result of adding edge $ij$, $k_1$ must be connected to $j$. Hence $j$ dominates $i$. 

Consider now an edge $ij\in g$. Since the removal of  $ij$ is not improving for either $i$ or $j$, there exists a shortest path between some vertices $s_1\neq i\neq t_1$ that uses edge $ij$. 
As $diam(g)=2$, one of $s_1,t_1$ (say $t_1$) must be $j$, hence $s=s_1\in N(i)\setminus N(j)$. Similarly, there must be a vertex $t=t_{2}$ in  $N(j)\setminus N(i)$. Hence none of $i,j$ dominates the other.

\end{proof}

\begin{figure}
\begin{center}
\scalebox{0.350}{

\begin{tikzpicture}[shorten >=1pt, auto, node distance=3cm, ultra thick]
 	\tikzstyle{type1} = [circle, draw=red, fill=white!]    
    \tikzstyle{type11} = [circle, draw=red, double, fill=white!]  
    \tikzstyle{type2} = [rectangle, draw=red, fill=white!]  
    \tikzstyle{type22} = [rectangle, draw=red, double, fill=white!]  
    \tikzstyle{edge_style} = [draw=black, line width=2, ultra thick]
    
	\node[type1] (v1) at (5,0) {1};    
    \node[type1] (v2) at (-2, 4) {2};
    \node[type1] (v3) at (-5,-2.5) {3};
	\node[type1] (v4) at (1,4) {4};    
    \node[type1] (v5) at (-3,-4) {5};
    \node[type1] (v6) at (-4,2) {6};
	\node[type1] (v7) at (1,-4.5) {7};    
    \node[type1] (v8) at (-5.5,0.5) {8};
    \node[type1] (v9) at (3.5,-2.5) {9};
	\node[type1] (v10) at (4, 2) {10};
	\foreach \x in {2, 3, 4, 5, 6, 8, 10}{%
    		\draw[edge_style]  (v1) edge (v\x);
    };
    \foreach \x in {3, 5, 7, 9}{%
    		\draw[edge_style]  (v2) edge (v\x);
    };
  	\foreach \x in {4, 6, 7, 8, 9}{%
    		\draw[edge_style]  (v3) edge (v\x);
    };
    \foreach \x in {5, 7, 9}{%
    		\draw[edge_style]  (v4) edge (v\x);
    };
    \foreach \x in {6, 7, 8, 9}{%
    		\draw[edge_style]  (v5) edge (v\x);
    };
  	\foreach \x in {7, 9}{%
    		\draw[edge_style]  (v6) edge (v\x);
    };
    \foreach \x in {8, 10}{%
    		\draw[edge_style]  (v7) edge (v\x);
    };
    \foreach \x in {9}{%
    		\draw[edge_style]  (v8) edge (v\x);
    };
  	\foreach \x in {10}{%
    		\draw[edge_style]  (v9) edge (v\x);
    };
    
\end{tikzpicture}}
\end{center}
\caption{Non-bipartite APSN for betweenness games.} 
\label{cexp} 
\end{figure}
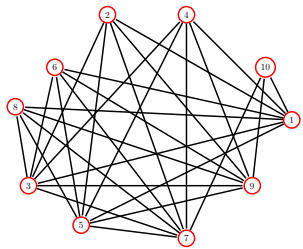

\begin{observation} 
Complete bipartite graphs $K_{a,b}$, $a,b\geq 2$ satisfy the conditions of the theorem, hence they are APSN. 
One could conjecture that these are \textbf{all} connected APSN with at least $2$ vertices, but this is \textbf{not} true: for a counterexample, found using computer simulations,  see the graph $g$ in Figure~\ref{cexp}. Graph $g$ is not bipartite as it has, e.g. triangle 4,5,9. 
\end{observation}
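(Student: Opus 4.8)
The plan is to treat the two halves of the statement separately: first that every $K_{a,b}$ with $a,b\ge 2$ is APSN, and then that the graph $g$ of Figure~\ref{cexp} is a connected APSN on $\ge 2$ vertices witnessing that the converse fails.

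For the first half I would check the three hypotheses of Theorem~\ref{bet-thm} on $K_{a,b}$, with parts $A,B$ of sizes $a,b$. Every vertex of $A$ has degree $b\ge 2$ and every vertex of $B$ has degree $a\ge 2$, so $deg(l)\ge 2$ throughout; the diameter is exactly $2$, since vertices in different parts are adjacent, two vertices of the same part share a neighbour in the (nonempty) opposite part, and $diam=1$ is excluded because $a,b\ge 2$ leaves non-adjacent pairs. For the domination condition take $i\ne j$: if $i,j$ lie in the same part then $ij\notin E$ and $N(i)\setminus\{j\}=N(j)\setminus\{i\}$ is precisely the opposite part, so these sets are comparable, matching the ``iff''; if $i\in A$ and $j\in B$ then $ij\in E$, while $N(i)\setminus\{j\}=B\setminus\{j\}$ and $N(j)\setminus\{i\}=A\setminus\{i\}$ are disjoint and both nonempty (here $a,b\ge 2$ is used), hence incomparable, again matching. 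So $K_{a,b}$ satisfies the hypotheses of Theorem~\ref{bet-thm} and is therefore APSN.

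For the second half I would read the edge set of $g$ off Figure~\ref{cexp} and verify the same three conditions. The cleanest route is to observe that the vertex set partitions into classes of vertices with identical open neighbourhoods, $X=\{1,7,9\}$, $Y=\{3,5\}$, $Z=\{2,4,6,8\}$, together with the singleton $\{10\}$, and that two vertices from distinct classes are adjacent except for the pairs joining $\{10\}$ to $Y\cup Z$; in particular $g$ is connected, has minimum degree $3$, and, every non-adjacent pair having a common neighbour, has diameter $2$. Within a class, two vertices are non-adjacent with equal open neighbourhoods, so $N(i)\setminus\{j\}=N(j)\setminus\{i\}$ and the sets are comparable, consistent with the non-edge; the vertex $10$ is non-adjacent to each vertex of $Y\cup Z$, and there $N(10)=\{1,7,9\}$ is contained in the other neighbourhood, again comparable and consistent. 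For each of the remaining (cross-class) adjacent pair-types one checks directly that the two restricted neighbourhoods are incomparable, so by Theorem~\ref{bet-thm} $g$ is APSN. Finally $\{4,5,9\}$ induces a triangle (the edges $45$, $49$, $59$ are all present), so $g$ has an odd cycle and is not bipartite, hence not complete bipartite.

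The only laborious step is the case analysis for $g$: there is no structural shortcut, just a finite check, which is precisely what the computer search in the cited implementation performs; grouping the vertices into the classes above cuts the $\binom{10}{2}=45$ pairs down to a handful of representative cases but does not remove the bookkeeping. The $K_{a,b}$ claim and the non-bipartiteness remark are both one-line verifications.
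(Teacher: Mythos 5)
Your proof is correct and follows the same route the paper intends: verify the hypotheses of Theorem~\ref{bet-thm} directly for $K_{a,b}$ and for the graph of Figure~\ref{cexp}. The only difference is that the paper leaves the verification of the counterexample to computer simulation, whereas your decomposition of $g$ into the neighbourhood classes $\{1,7,9\}$, $\{3,5\}$, $\{2,4,6,8\}$, $\{10\}$ (which I checked against the figure's edge list and is accurate) turns that finite check into a short, human-readable case analysis --- a small but genuine improvement over the paper's presentation.
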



\section{Other Centralities: Results and Conjectures}

In this section we study centrality games for some measures that appear to satisfy none of Axioms 1,2,3: eccentricity centrality, random walk betweenness and eigenvector centrality.  

First, we show that eccentricity centrality is very close to obeying Axiom 2: 

\begin{lemma} 
Let $g$ be a network, $i$ a node in $g$ and $j$ another node such that $ij\not \in g$. The following are true: 
\begin{itemize} 
\item[-] If $j\not \in Conn(i)$ then $EC(i,g+ij)\leq EC(i,g)$. 
\item[-] If $j\in Conn(i)$ then $EC(i,g+ij)\geq EC(i,g)$. The inequality is strict iff $j$ is on all shortest paths in $g$ to all vertices $k$ that are farthest from $i$. 
\end{itemize} 
\label{lemma-ecc}
\end{lemma}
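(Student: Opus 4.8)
The plan is to analyze how the eccentricity centrality $EC(i,g) = \frac{n-1}{\max_k d(i,k)}$ (equivalently, the reciprocal of the \emph{graph eccentricity} $e(i) = \max_k d(i,k)$) changes under the addition of edge $ij$. Since $EC$ is monotone decreasing in $e(i)$, it suffices to track what happens to $e(i)$. The key observation is that adding an edge can only shorten distances (or keep them equal): for every vertex $k$, $d_{g+ij}(i,k) \le d_g(i,k)$. Hence $e_{g+ij}(i) \le e_g(i)$, which immediately gives $EC(i,g+ij) \ge EC(i,g)$ \emph{whenever $i$ has any finite eccentricity}, i.e. whenever $Conn(i)$ is nontrivial. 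This is the easy half of the "$j \in Conn(i)$" case; the work is in (i) handling the disconnected case, where the component grows and distances to the newly reachable vertices become finite, and (ii) pinning down exactly when the inequality is strict.

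First I would handle the case $j \notin Conn(i)$. After adding $ij$, the component of $i$ becomes $Conn(i) \cup Conn(j)$. For the vertices $k \in Conn_g(i)$, distances from $i$ are unchanged (any path from $i$ into $Conn_g(j)$ must traverse the bridge $ij$, so it cannot create a shortcut between two vertices already in $Conn_g(i)$). For the new vertices $k \in Conn_g(j)$, we now have $d_{g+ij}(i,k) = 1 + d_g(j,k) \ge 1 > 0$. Therefore $e_{g+ij}(i) = \max\bigl(e_g(i),\, 1 + e_g(j)\bigr) \ge e_g(i)$, so the \emph{eccentricity} weakly increases, and hence $EC(i,g+ij) \le EC(i,g)$, as claimed. (One should note the degenerate subtlety: if $i$ was isolated in $g$ then $EC(i,g) = 0$ by definition, and $EC(i,g+ij) \ge 0$, so the stated inequality $EC(i,g+ij) \le EC(i,g)$ would fail unless we read the convention carefully — here $i \in Conn(i)$ always and the max over $Conn(g+ij)$ is taken, so $EC(i,g+ij)>0=EC(i,g)$; I would flag that the lemma implicitly assumes $i$ is non-isolated in $g$, matching the eccentricity-centrality definition $C_{ecc}(i)=0$ for isolated nodes and $\min(\frac{n-1}{d(i,k)})$ otherwise, which is positive exactly when $Conn(i)\neq\{i\}$.)

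Next, for the case $j \in Conn(i)$, I have already argued $e_{g+ij}(i) \le e_g(i)$ hence $EC(i,g+ij) \ge EC(i,g)$. For the strictness characterization, let $F = \{k : d_g(i,k) = e_g(i)\}$ be the set of vertices farthest from $i$. The eccentricity strictly decreases iff $d_{g+ij}(i,k) < d_g(i,k)$ for \emph{every} $k \in F$ (if even one farthest vertex keeps its distance, the max is unchanged). Now, adding edge $ij$ shortens $d(i,k)$ precisely when there is a shortest $i$–$k$ path in $g+ij$ using the new edge, i.e. when $1 + d_g(j,k) < d_g(i,k)$; equivalently, going "$i$ to $j$ in one step, then $j$ to $k$ optimally" beats every old route. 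The claim in the lemma is phrased as: the inequality is strict iff $j$ lies on \emph{all} shortest $i$–$k$ paths in $g$, for all $k \in F$. I would verify this equivalence by the standard fact that $d_{g+ij}(i,k) < d_g(i,k)$ holds iff $1 + d_g(j,k) \le d_g(i,k) - 1$, and then show that for a \emph{farthest} vertex $k$ this is equivalent to $j$ being a cut-vertex-like obligatory waypoint: since $d_g(i,j) + d_g(j,k) \ge d_g(i,k)$ always, with equality iff $j$ is on some shortest $i$–$k$ path, the strict shortcut $1 + d_g(j,k) < d_g(i,k)$ forces $d_g(i,j) \ge 2$ and $d_g(i,j) + d_g(j,k) = d_g(i,k)$ to be tight in a way that, combined with $k$ being at maximum distance, makes $j$ lie on every shortest $i$–$k$ path.

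The main obstacle I anticipate is this last equivalence — proving that for a farthest vertex $k$, "adding $ij$ strictly shortens $d(i,k)$" is the \emph{same} as "$j$ is on all shortest $i$–$k$ paths in $g$." The "$\Leftarrow$" direction (if $j$ is on all shortest $i$–$k$ paths then the new edge helps) is not automatic: one must argue that replacing the initial segment of an $i$–$k$ geodesic through $j$ by the single edge $ij$ strictly saves length, which uses that $d_g(i,j) \ge 2$ — and this is where being \emph{farthest} matters, since if $d_g(i,j) = 1$ the edge already exists. For "$\Rightarrow$", one must rule out the possibility that the shortcut arises without $j$ being obligatory; here I would use minimality/maximality of $k$'s distance together with the triangle inequality $d_g(i,k) \le d_g(i,j) + d_g(j,k)$ to force the geodesic structure. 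I would also double-check the boundary subtlety when $F$ contains $j$ itself or a neighbor of $j$, and make sure the "for all $k$ farthest" quantifier is handled correctly (strictness of the \emph{max} requires strict decrease at every argument achieving the old max, including possibly new vertices if this were the disconnected case — but in the connected case $F \subseteq Conn_g(i)$ and no new vertices appear). Everything else is routine shortest-path bookkeeping.
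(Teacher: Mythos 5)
Your treatment of the two weak inequalities is correct and matches the paper's own (much terser) argument: for $j\notin Conn(i)$ the newly reachable vertices of $Conn(j)$ sit at distance $1+d_g(j,\cdot)\geq 1$ from $i$ while old distances are unchanged, so the maximum distance from $i$ cannot drop; for $j\in Conn(i)$ all distances from $i$ weakly decrease, so the eccentricity weakly decreases and $EC$ weakly increases. Your observation that the first bullet fails when $i$ is isolated (then $EC(i,g)=0<EC(i,g+ij)$ under the paper's own definition) is a legitimate boundary case that the paper silently ignores. Note, however, that the paper's proof stops exactly there: it establishes only the two weak inequalities and never addresses the strictness characterization.

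The genuine gap is precisely where you anticipated it, and it is worse than you suspect: the ``only if'' direction you are trying to prove is false as stated, so no triangle-inequality argument will close it. Writing $e=\max_k d_g(i,k)$ and $F$ for the set of farthest vertices, the correct criterion is that $EC$ strictly increases iff $1+d_g(j,k)<d_g(i,k)$ for every $k\in F$, since $d_{g+ij}(i,k)=\min\bigl(d_g(i,k),\,1+d_g(j,k)\bigr)$ and vertices outside $F$ already have distance below $e$. Your ``$\Leftarrow$'' argument is sound: if $j$ lies on a shortest $i$--$k$ geodesic then $d_g(j,k)=d_g(i,k)-d_g(i,j)\leq e-2$ because $d_g(i,j)\geq 2$. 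But the converse fails. Take $V=\{i,x,y,z,j,k\}$ with edges $ix$, $xy$, $yk$, $iz$, $zj$, $jk$: then $e=3$ and $F=\{k\}$, and adding $ij$ drops the eccentricity of $i$ from $3$ to $2$ (so $EC$ strictly increases from $\frac{5}{3}$ to $\frac{5}{2}$), yet $i$--$x$--$y$--$k$ is a shortest path avoiding $j$, so $j$ is not on all shortest paths from $i$ to the farthest vertex. The strictness condition in the lemma should be replaced by the distance condition $d_g(j,k)\leq d_g(i,k)-2$ for all farthest $k$; with that reformulation your shortest-path bookkeeping goes through directly.
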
 

In spite of this result, the structure of APSN for eccentricity centrality games is quite different from the one for centrality games with measures satisfying Axiom~\ref{axiom-2}: 

\begin{theorem}
All vertices in connected components of size at least three  of an APSN have degree at least two. 
On the other hand all  connected, eccentricity-one graphs with min. degree 2 and at least two nodes with degree at most $n-2$ are APSN. There exist  (Fig.~\ref{ecc}) APSN with eccentricity two. 
\label{ecc-thm}
\end{theorem}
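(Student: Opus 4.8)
The plan is to prove the three assertions separately, all relying on the characterization of improving moves for eccentricity centrality provided by Lemma~\ref{lemma-ecc} together with the fact that edge cost is an arbitrarily small $\epsilon>0$.

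First I would establish the degree-two claim. Suppose $C$ is a connected component of an APSN $g$ with $|C|\geq 3$ and let $i\in C$ have $\deg(i)=1$, with unique neighbor $p$. Since $|C|\geq 3$, there is a vertex $k\in C$, $k\neq i,p$. Then the distance from $i$ to its farthest vertex is realized only along paths through $p$ (every path out of $i$ starts with the edge $ip$), so by the second bullet of Lemma~\ref{lemma-ecc}, adding any edge $ij$ with $j\in C\setminus\widehat{N(i)}$ strictly increases $EC(i)$; for small $\epsilon$ this is an improving move for $i$. I then need it to be at least weakly improving for $j$ as well: adding $ij$ cannot decrease $EC(j)$ since $j\in Conn(i)$ (first vs.\ second bullet of Lemma~\ref{lemma-ecc} — $j$ and $i$ are in the same component). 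Hence the flip is improving, contradicting pairwise stability. So every vertex of $C$ has degree $\geq 2$. (One should double-check the boundary case $|C|=3$, where $C$ is either a path $P_3$ or a triangle; $P_3$ has a pendant vertex and the argument applies, so $C=K_3$, consistent with min degree $2$.)

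Second, the sufficiency direction: let $h$ be connected, of eccentricity one (i.e.\ $diam(h)=1$? — no: eccentricity-one means every vertex has eccentricity one, so $h$ is complete; this cannot be what is meant, so I read "eccentricity-one graph" as a graph in which some/all relevant vertices have $C_{ecc}=\frac{n-1}{1}$, i.e.\ $diam(h)\le 2$ with the stated degree conditions), with minimum degree $\geq 2$ and at least two vertices $x,y$ of degree $\leq n-2$. I must show no edge flip is improving. For a non-edge $ij$: since $diam(h)\le 2$, vertex $i$ is already at distance $\le 2$ from everyone; adding $ij$ can only turn some distance-$2$ pairs into distance-$1$, and by Lemma~\ref{lemma-ecc} $EC(i,h+ij)\ge EC(i,h)$ — I need this increase to be \emph{not strict} for at least one of $i,j$ so the addition fails to be improving for that endpoint (weak-improvement-for-both is required for an addition to count). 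The strictness clause of Lemma~\ref{lemma-ecc} says the increase is strict only if $j$ lies on all shortest paths to all farthest vertices of $i$; with $diam\le 2$ and min degree $\ge 2$ one shows this cannot hold for both $i$ and $j$ simultaneously, because each of $i,j$ has an alternative short path to any farthest vertex. For edge removal $ij\in h$: min degree $\ge 2$ means no pendant vertices, so by Lemma~\ref{bet}-style reasoning (here Lemma~\ref{lemma-ecc}, bridge case) removing a bridge is not improving; and removing a non-bridge edge cannot strictly decrease anyone's distances downward — one checks it does not strictly increase $EC$ of either endpoint (removal is improving only if strictly improving for some endpoint, and distances only weakly increase, so $EC$ weakly decreases). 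The role of the two vertices of degree $\le n-2$ is to guarantee $h$ is not complete (complete graphs are not APSN, as in Theorem~\ref{bet-thm}) while still being eccentricity-friendly — I would verify the degree-$\le n-2$ condition is exactly what blocks the residual improving additions.

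Third, the existence of eccentricity-two APSN is handled by exhibiting the graph in Figure~\ref{ecc} and verifying directly (a finite check, appealing to Lemma~\ref{lemma-ecc} for each non-edge and each edge) that no flip is improving for small $\epsilon$. The main obstacle I anticipate is the sufficiency direction: correctly pinning down what "eccentricity-one graph" means in the statement and then handling the strictness clause of Lemma~\ref{lemma-ecc} to rule out \emph{all} improving edge additions — in particular showing that when $i$ has an alternative neighbor reaching its farthest vertices, adding $ij$ leaves $EC(i)$ unchanged, so the move is not weakly improving for both endpoints. The degree-two claim and the finite verification of the figure are routine by comparison.
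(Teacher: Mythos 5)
There is a genuine gap, and it recurs in two places: in both cases you have the sign of the cost term working against you. For the degree-two claim, the paper does not argue via an edge \emph{addition} by the pendant vertex; it argues that the unique neighbour $q$ of a pendant vertex $p$ profits from \emph{deleting} the edge $pq$: since $p$ is pendant, no shortest path between other vertices uses $p$, so $q$'s eccentricity centrality weakly increases when $p$ drops out of $Conn(q)$, while $q$ saves the cost $c$ --- deletion is therefore strongly improving for $q$ and the network is unstable. Your addition-based argument fails at both endpoints: the strictness clause of Lemma~\ref{lemma-ecc} requires the \emph{new neighbour} $j$ (not the old one) to lie on all shortest paths to all farthest vertices of $i$, which is false in general (in the star $K_{1,3}$ a leaf $i$ gains nothing in eccentricity by linking to another leaf $j$, because a third leaf stays at distance two); and even where $EC(i)$ does strictly increase, ``$EC(j)$ cannot decrease'' is not enough to make the flip weakly improving for $j$, since $j$ pays the positive cost $c$ --- you would need a strict increase for $j$ as well.

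Second, and more seriously, your treatment of edge removals in the sufficiency direction is backwards. You write that since distances only weakly increase upon removal, $EC$ weakly decreases and hence removal is not improving. But $u_i(g)=C_i(g)-c\cdot deg(i)$: deleting $ij$ saves $i$ the cost $c$, so the deletion is \emph{strongly improving} for $i$ whenever $EC(i)$ merely stays the same. To rule out deletions you must show that $EC$ strictly drops for \emph{both} endpoints (any strict drop of eccentricity centrality is bounded below by a constant depending only on $n$, which then beats any sufficiently small $\epsilon$); this is exactly what the paper's proof establishes by showing the eccentricity of $i$ and $j$ goes from one to two. The remaining pieces --- the reading of ``eccentricity-one'' (radius one, i.e.\ existence of a universal vertex), the role of the two vertices of degree at most $n-2$ in blocking additions, and the finite verification of Figure~\ref{ecc} --- you correctly identify as the delicate points, but they are left as sketches, so the proposal does not yet prove the second assertion either.
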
 

\begin{figure}[h]
\begin{center}
\scalebox{0.350}{

\begin{tikzpicture}[shorten >=1pt, auto, node distance=3cm, ultra thick]
 	\tikzstyle{type1} = [circle, draw=red, fill=white!]    
    \tikzstyle{type11} = [circle, draw=red, double, fill=white!]  
    \tikzstyle{type2} = [rectangle, draw=red, fill=white!]  
    \tikzstyle{type22} = [rectangle, draw=red, double, fill=white!]  
    \tikzstyle{edge_style} = [draw=black, line width=2, ultra thick]
    
	\node[type1] (v4) at (-4,-3) {4};    
    \node[type1] (v5) at (0.5,1) {5};
    \node[type1] (v6) at (4,1.5) {6};
	\node[type1] (v7) at (3,-2) {7};    
    \node[type1] (v8) at (-0.5,-2.5) {8};
    \node[type1] (v9) at (-3.5,2) {9};

    \draw[edge_style]  (v4) edge (v5);
    \draw[edge_style]  (v4) edge (v8);
    \draw[edge_style]  (v8) edge (v7);
    \draw[edge_style]  (v7) edge (v6);
    \draw[edge_style]  (v6) edge (v5);
    \draw[edge_style]  (v5) edge (v9);
    \draw[edge_style]  (v9) edge (v8);    
    
\end{tikzpicture}}
\end{center}
\caption{APSN for eccentricity centrality games.} 
\label{ecc} 
\end{figure}
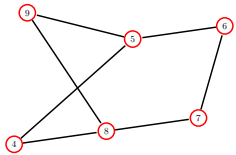 


We weren't able to obtain a full characterization of APSN in this case, or analytical results for random walk betweenness and eigenvector centrality. However, computer simulations suggest that the following statements are true: 

\begin{conjecture}
For random walk betweenness centrality the only APSN are the empty graph $\emptyset_{n}$ and the complete $K_{n}$.  
\end{conjecture}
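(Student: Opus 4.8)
The plan is to work with the limiting ($\epsilon\to 0^{+}$) version of pairwise stability. For a sufficiently small edge cost, an edge addition $ij$ is an improving move exactly when $C_{RWB}[i]$ and $C_{RWB}[j]$ \emph{both} strictly increase, while an edge deletion $ij$ is an improving move exactly when $C_{RWB}[i]$ or $C_{RWB}[j]$ fails to strictly decrease (the tiny cost saving tips the balance on any non-harmful deletion). So a network $g$ is APSN iff (i) no non-edge can be added with a strict gain at both endpoints, and (ii) every edge is \emph{essential} for both endpoints, meaning its removal strictly lowers the $C_{RWB}$ of each. I would first dispatch the easy direction. For $\emptyset_{n}$: adding a single edge $ij$ yields a two-vertex component plus isolated nodes, and $C_{RWB}[i]$ stays $0$ because there is no source--sink pair avoiding $i$ whose walk can visit $i$; there are no edges to delete; hence $\emptyset_{n}$ is APSN. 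For $K_{n}$ with $n\ge 3$ (note $K_{2}$ is \emph{not} APSN, since deleting its edge is harmless): there are no non-edges, so I only need that deleting any edge $ij$ strictly lowers $C_{RWB}[i]$ and $C_{RWB}[j]$; by symmetry it suffices to treat $i$, which I would do either by a short direct computation on $K_{n}$ or, more conceptually, via a strict version of Rayleigh monotonicity for current-flow betweenness (using the equivalence with random-walk betweenness), since removing $ij$ destroys a direct channel between $i$ and $j$ and strictly reduces the current $i$ carries between $j$ and any third vertex.

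For the converse I split into the disconnected and connected cases. If $g$ is disconnected but not $\emptyset_{n}$, some component $C_{1}$ has at least two vertices; pick $u\in C_{1}$ and any vertex $v$ in another component (possibly isolated), and add $uv$. The crux is \emph{locality past a cut vertex}: for any source $s$ and sink $t$ both in $C_{1}$, the trajectory of the absorbing walk restricted to $C_{1}$ is just the original walk made ``lazy at $u$'' (it may excurse into $v$'s side but must return through $u$ before being absorbed at $t$), and laziness does not change which vertices are visited; hence $r_{s,t}$ is unchanged for all such pairs and all intermediate vertices in $C_{1}$, and symmetrically on the other side. Meanwhile the new straddling pairs $(v,t)$ with $t\in C_{1}\setminus\{u\}$ (which exist since $|C_{1}|\ge 2$) each contribute strictly positively to $C_{RWB}[u]$, because a walk from $v$ absorbed at $t$ must cross the bridge and hence visit $u$; symmetrically $C_{RWB}[v]$ strictly increases. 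So $uv$ is an improving move and $g$ is not APSN.

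The substantive case, and the one I expect to be the main obstacle, is a \emph{connected} $g$ that is not complete; here I would work with the electrical/current-flow form of random-walk betweenness, writing $C_{RWB}[i]$ as a sum over ordered source--sink pairs $(s,t)$ of the current throughput at $i$ when a unit current flows from $s$ to $t$, and try to exhibit an improving flip. A natural candidate is a non-edge $ij$ at distance $2$, shown via a \emph{strict} short-cut/Rayleigh argument to raise the throughput at $i$ for some pair $(s,t)$ without lowering it for any; the delicate point is that improvement must be \emph{bilateral} (both $i$ and $j$ must gain), whereas electrical-monotonicity statements are naturally one-sided, so I would choose $ij$ greedily (e.g.\ a non-edge on many shortest paths, or on the boundary of a densest subgraph) to force both endpoints to gain, falling back to an improving \emph{deletion} when $g$ has a pendant vertex or a bridge (handled as in Lemma~\ref{bet} and the disconnected case). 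I would also flag a genuine subtlety the plan must confront: for very small $n$ the statement has exceptions --- for $n=3$ the path $P_{3}$ is APSN, since its unique non-edge strictly helps neither endpoint (each is already essential between the other two vertices) and neither edge can be deleted without disconnecting --- so the clean dichotomy should most plausibly be proved under a hypothesis such as ``$n$ sufficiently large'', and pinning down exactly which small graphs are exceptional (and ruling out all others) is where the real difficulty lies.
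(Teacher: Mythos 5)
First, a point of reference: the paper does not prove this statement at all --- it is stated explicitly as a \emph{conjecture}, supported only by computer simulations (``We weren't able to obtain \dots analytical results for random walk betweenness''). So there is no proof of the authors' to compare yours against; your proposal has to stand on its own. As a proof it does not yet close. Your reduction of APSN to the cost-free criteria (additions need a strict bilateral gain, deletions are improving unless both endpoints strictly lose) is correct, and your treatment of $\emptyset_n$ and of the disconnected case is sound under the paper's definition of $r_{j,k}$ as a \emph{visiting probability} (though note it would fail under Newman's net-current formulation, where a freshly attached pendant vertex $v$ still carries zero flow for every pair excluding it, so the addition is not strictly improving for $v$). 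But the decisive case --- a connected, incomplete $g$ --- is exactly where you stop at a heuristic (``choose $ij$ greedily,'' ``a strict Rayleigh argument''), and the obstruction you yourself identify, that electrical monotonicity is one-sided while an improving addition must benefit \emph{both} endpoints, is precisely the open difficulty. No lemma is proved there, so the conjecture remains a conjecture after your argument.

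Second, your proposed small-$n$ exception is wrong, and this matters because you use it to argue the statement needs a ``sufficiently large $n$'' hypothesis. Take $P_3 = a\mbox{--}b\mbox{--}c$ and compute directly: for vertex $a$ the only contributing ordered pairs are $(b,c)$ and $(c,b)$; a walk from $b$ absorbed at $c$ visits $a$ with probability $\tfrac12$, and a walk from $c$ absorbed at $b$ is absorbed immediately, so $C_{RWB}[a]=\tfrac12$. In $K_3$ the same two pairs each contribute $\tfrac12$, so $C_{RWB}[a]=1$. Hence adding the non-edge $ac$ strictly increases the centrality of both $a$ and $c$ (the same conclusion holds under the net-current definition, where the values go from $0$ to $\tfrac23$), so $P_3$ is \emph{not} APSN; your claim that ``its unique non-edge strictly helps neither endpoint'' is a miscalculation. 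The one boundary caveat you raise that is genuine is $K_2$: both endpoints have zero betweenness with or without the edge, so deleting it is improving and $K_2$ is not APSN for $n=2$. That is a legitimate (if trivial) refinement of the conjecture's statement, but it does not support the broader claim that the dichotomy fails for small $n\ge 3$.
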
 

As for eigenvector centrality, although it seems not to have any monotonicity properties, experimental evidence is consistent with the following conjecture, that seems to situate this measure together with the monotonic ones: 

\begin{conjecture} 
The complete graphs $K_n$ are the only asymptotically pairwise stable networks for eigenvector centrality. 
\end{conjecture}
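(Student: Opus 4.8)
The statement has two halves: that every complete graph $K_n$ is an APSN for eigenvector centrality, and that nothing else is. For the first half the plan is a direct symmetry computation. Fix $n\ge 3$ and an edge $e=ij$ of $K_n$. In $K_n-e$ the vertices $i,j$ are interchangeable, as are the remaining $n-2$ vertices, so the Perron eigenvector has the form $(a,a,b,\dots,b)$ with $a$ the entry on $\{i,j\}$; solving the resulting $2\times 2$ reduced eigenvalue system gives $\lambda=\tfrac{(n-3)+\sqrt{n^2+2n-7}}{2}>n-2$, hence $b/a=\lambda/(n-2)>1$, and since $2a^2+(n-2)b^2=1$ this forces $a<1/\sqrt n=C_{eig}[i,K_n]$. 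Thus deleting any edge strictly lowers the eigenvector centrality of both its endpoints, so for all sufficiently small $\epsilon>0$ no deletion is improving; as $K_n$ has no non-edges, it is APSN ($K_1,K_2$ are immediate). For the converse I would let $g$ be any graph on $n\ge 2$ vertices that is not complete and exhibit an edge flip that is improving for all small $\epsilon>0$, which shows $g$ is not APSN. Since the centrality jump is fixed and the cost term is $O(\epsilon)$, adding a non-edge $ij$ is improving for all small $\epsilon$ exactly when $C_{eig}$ strictly increases at both $i$ and $j$, and deleting an edge $ij$ is improving whenever $C_{eig}$ does not decrease at one of its endpoints.

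Next I would reduce to the connected case. If $g$ is disconnected, pick a component $C$ whose spectral radius equals that of $g$ and a vertex $v\notin C$; then $C_{eig}[v,g]=0<C_{eig}[v,g+vu]$ for any $u\in C$ (the new graph is connected with strictly larger spectral radius, and $v$ lies in its unique top component), and — using the monotonicity claim below, or a direct computation when $C$ is a clique — $C_{eig}[u,\cdot]$ does not drop, so $g+vu$ is an improving addition. Hence we may assume $g$ is connected and not complete, and therefore has at least one non-edge.

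The crux is the monotonicity claim: \textbf{for a connected graph $g$ and a non-edge $ij$, adding $ij$ weakly increases $C_{eig}[i]$ and $C_{eig}[j]$, and strictly increases both for a suitably chosen non-edge} — whence that addition is an improving move for both endpoints and $g$ is not APSN, completing the proof. To attack the claim I would parametrize $A_t=A(g)+t(e_ie_j^{\top}+e_je_i^{\top})$ for $t\in[0,1]$, with Perron data $(\lambda_t,w_t)$, $\|w_t\|=1$, and apply first-order perturbation theory: $\dot\lambda_t=2(w_t)_i(w_t)_j>0$, while $\dot w_t=-(w_t)_j R_t e_i-(w_t)_i R_t e_j$ with $R_t=(A_t-\lambda_t I)^{+}$ the reduced resolvent, so
\[
(\dot w_t)_i=-(w_t)_j\,(R_t)_{ii}-(w_t)_i\,(R_t)_{ij}.
\]
The first term is $\ge 0$ because $(R_t)_{ii}=\sum_{k\ge 2}(u_k)_i^2/(\nu_k-\lambda_t)\le 0$, every non-Perron eigenvalue $\nu_k$ of $A_t$ lying below $\lambda_t$; strictness for a well-chosen $ij$ can be arranged since $\dot\lambda_t>0$ forces the eigenvector to move, and choosing $i$ of maximum degree among vertices having a non-neighbour should localize the strict change at the endpoints.

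The \textbf{main obstacle} is the sign of the second term: $(R_t)_{ij}$ need not be sign-definite, so $(\dot w_t)_i\ge 0$ does not follow from sign considerations alone — which is presumably why the statement remains a conjecture. The danger case is a graph with a spectral bottleneck, where the second eigenvector is nearly bimodal and $i,j$ sit on opposite sides, making $(R_t)_{ij}$ large and positive; one must then rule out an adverse cancellation between the two terms of $(\dot w_t)_i$. I see two plausible routes for this: (a) replace the perturbation argument by the walk interpretation $C_{eig}[i]\propto\lim_m (A^m\mathbf 1)_i/\|A^m\mathbf 1\|$ and exploit the entrywise inequality $A(g+ij)^m\ge A(g)^m$ to show the walk counts from $i$ and $j$ grow at least at the root-mean-square rate over all vertices; or (b) a direct Perron–Frobenius comparison of the two eigenvectors via a monotone fixed-point map. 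Either way, controlling this cancellation uniformly over all connected non-complete graphs is the hard step; the remaining bookkeeping (disconnected graphs, ties in spectral radius between components, isolated vertices arising in $g-ij$) then follows by applying the same inequalities componentwise.
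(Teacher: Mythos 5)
There is nothing to compare your attempt against: the statement you were given is labelled a \emph{conjecture} in the paper, supported only by computer simulations, and the authors explicitly say they were unable to obtain analytical results for eigenvector centrality. So the paper contains no proof, and the relevant question is whether your argument actually closes the conjecture. It does not, and you say so yourself. Your forward direction is fine and is a genuine (if small) step beyond what the paper records: the symmetry computation for $K_n-e$ is correct ($\lambda^2-(n-3)\lambda-2(n-2)=0$ gives $\lambda>n-2$, hence $b>a$ and $a<1/\sqrt{n}$), so deleting any edge of $K_n$ strictly lowers both endpoints' centralities and $K_n$ is APSN. But the converse is the entire content of the conjecture, and it rests on the monotonicity claim that adding a non-edge (strictly) increases the normalized Perron entries of both endpoints. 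That is precisely the property the paper warns about (``it seems not to have any monotonicity properties''), and your own perturbation analysis locates the obstruction exactly: $(\dot w_t)_i$ contains the term $-(w_t)_i (R_t)_{ij}$ with $(R_t)_{ij}$ of uncontrolled sign, and neither of your proposed escape routes (the walk-count limit or a monotone fixed-point map) is carried out. The entrywise inequality $A(g+ij)^m\ge A(g)^m$ in particular does not help, because the normalization divides by a quantity that also grows, and it is exactly this ratio that can move the wrong way. So the crux is unproved and the conjecture remains open after your attempt.

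Two smaller points of bookkeeping also need attention. First, with a positive edge cost $\epsilon$, an edge addition is an improving move for all small $\epsilon$ only if the centrality of \emph{both} endpoints strictly increases (a weakly improving move requires $u_i(h)\ge u_i(g)$, i.e.\ a centrality gain of at least $\epsilon$); you state this correctly once, but in your disconnected-graph reduction you only argue that $C_{eig}[u,\cdot]$ ``does not drop'' for the endpoint $u$ inside the dominant component, which is not enough to make $g+vu$ an improving addition. Second, eigenvector centrality for disconnected graphs (and the choice of normalization) is not pinned down in the paper's definition, and your argument that $C_{eig}[v,g]=0$ for $v$ outside the component of maximal spectral radius, while standard, should be stated as an assumption. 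Neither of these is fatal, but both sit inside the one direction of the statement that you cannot yet prove.
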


\section{Models with Truncated Centralities: Universality, Existence, Inference}

So far we have assumed that agents are unrestricted optimizers: they increase their centrality, subject to maintaining reasonable costs for their direct contacts. In reality, agents' utilities may be subject to \emph{diminishing returns}: the marginal benefit  from increased centrality diminishes (or even plateaus) beyond a point.  A simple way of incorporating this observation into the model is via the following: 

\begin{definition} Given centrality measure $C$ and threshold $\theta$, the \emph{$\theta$-truncation of $C$} is the centrality $C_{\theta}$ defined by 
\[
C_{\theta}[g]=\left\{\begin{array}{cc}
C[g], & \mbox{ if }C[g]<\theta \\
\theta, & \mbox{ otherwise.}
\end{array} 
\right. 
\]
Note that for $\theta=\infty$ (or just a large integer) we get the original utility. So truncations really extend our previous framework. 
\end{definition} 

This simple variation has a profound effect on  APSN: the following result first shows that truncation makes the model \emph{universal}. Then we give a condition for recognizing APSN in games with increasing measures. 

\begin{theorem} 
The following are true: 
\begin{itemize} 
\item[a).] For \textbf{every network $g$} and family of increasing centralities $C_{i}$ there exist thresholds $(\theta_{i})$ s.t. $g$ is an APSN for the truncated centrality game with thresholds $\theta_{i}$. 
\item[b).] On the other hand, if agents' original centrality measures are all increasing, then for all families $(\theta_{i})$ of thresholds, the APSN for the truncated centrality games with thresholds $\theta_{i}$, \textbf{if they exist}, can be characterized as the graphs with "Pareto optimal centralities", i.e. graphs $h$ satisfying the following properties: 
\begin{itemize} 
\item[-] 
for every  $ij\not \in E(h)$, $C_i[h]\geq \theta_{i}$ or $C_j[h]\geq \theta_{j}$, and 
\item[-] for every edge $ij\in E(h)$,  removing $ij$ from $h$ would yield a network $l$ with $C_i[l]<\theta_{i}$ and $C_j[l]<\theta_{j}$.
\end{itemize}  
\item[c).] \textbf{APSN exist} in all truncated centrality games with linear centralities. 
\item[d).] Consider a truncated centrality game with regular centralities $(C_{i})$ and nonnegative thresholds $(\theta_i)$. Let $M_i = max\{C_{i}[g+ij]: C_{i}[g]<\theta_{i}\}$. Assume that  whenever $g$ is a network, $i,j$ are nodes such that $ij\not\in g$ we have $C_i[g]<\theta_i \Leftrightarrow C_i[g+ij]\leq  M_i$ (note that the left-to-right implication is trivial). 
Then \textbf{APSN exist for such games.} 


\end{itemize} 
\label{thr-thm} 
\end{theorem}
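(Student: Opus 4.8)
For (a) the plan is to take $\theta_i := C_i[g]$ for every $i$. Then on $g$ each agent sits exactly at its threshold, so $C_{i,\theta_i}[g]=\theta_i$; adding any missing edge $ij$ leaves this value unchanged (the raw centrality only increases, since $C_i$ is increasing) while raising $deg(i)$, so such a flip is strictly worsening for its endpoints and hence not improving; removing any edge $ij\in E(g)$ lowers $C_i$ by a fixed positive amount, which for all $\epsilon$ below a suitable $\epsilon_0$ exceeds the $\epsilon$ of edge cost saved, so that flip is not improving either. Thus $g$ is pairwise stable for all $\epsilon<\epsilon_0$, i.e.\ an APSN. For (b), both directions are short threshold bookkeeping. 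If $h$ is an APSN and $ij\notin E(h)$ with $C_i[h]<\theta_i$ and $C_j[h]<\theta_j$, then adding $ij$ strictly raises both truncated centralities by fixed positive amounts, so for small $\epsilon$ it is (strongly) improving for both---contradiction; and if $ij\in E(h)$ while $h-ij$ leaves, say, $C_i[h-ij]\geq\theta_i$, then $C_{i,\theta_i}$ equals $\theta_i$ in both $h$ and $h-ij$, so severing $ij$ saves $\epsilon$ at no centrality loss to $i$ and is improving---contradiction. Conversely, the two displayed ``Pareto'' conditions are exactly the negations, for small $\epsilon$, of ``some addition is improving'' and ``some deletion is improving'', so any graph meeting them is pairwise stable for small $\epsilon$.

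\textbf{Part (d).} I would construct an APSN by a greedy edge-addition process. Call $i$ \emph{saturated} in $g$ if $C_i[g]\geq\theta_i$; agents with $\theta_i\leq 0$ are saturated in every graph, and since regular centralities are by definition increasing one checks that, for small $\epsilon$, the improving additions are precisely the non-edges $ij$ whose two endpoints are both unsaturated. Starting from the empty graph, repeatedly add such an edge; since edges are only ever added, this halts at some $h$ in which the unsaturated agents form a clique, so every non-edge has a saturated endpoint and no addition is improving. For severances: an unsaturated endpoint of an edge strictly loses truncated centrality by severing, so will not; and for a saturated endpoint $i$ of an edge $ij$, let $e^{*}=ij^{*}$ be the \emph{last} edge incident to $i$ that the process added. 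Just before its addition $C_i$ was below $\theta_i$, so by the hypothesis on $M_i$ the value jumped to at most $M_i$, and all later additions are edges not incident to $i$, hence by Axiom~\ref{axiom-ind} can only weakly decrease $C_i$; thus $C_i[h]\leq M_i$, and the nontrivial (right-to-left) half of the hypothesis, applied with $g=h-ij$, gives $C_i[h-ij]<\theta_i$ for every edge $ij$ at $i$. Severing any such edge therefore strictly drops $i$'s truncated centrality, so $i$ will not sever either, and $h$ is an APSN.

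\textbf{Part (c).} Linear centralities are regular but need not satisfy the hypothesis of (d) (the required ``jump'' fails as soon as some $w_{i,j}=0$ while some $w_{i,k}$ is large), so (c) calls for a separate, closely parallel argument. First, no APSN contains an edge $ij$ with $w_{i,j}=0$ or $w_{j,i}=0$ (the zero-direction endpoint strictly gains by severing it), and such an edge is never a mutual improving addition; so it suffices to work with the graph of \emph{positive pairs}, on which $C_i$ strictly increases with each incident edge. Now run the greedy of part (d) on positive pairs, but choosing the order of additions so that each agent absorbs its incident edges in non-increasing weight order --- when the $w_{i,j}$ are symmetric this is simply a single sweep through the positive pairs in non-increasing weight, adding $ij$ whenever both endpoints are still unsaturated. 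Greedy maximality again leaves every non-edge positive pair with a saturated endpoint, so no addition is improving; and for severances, an unsaturated endpoint will not sever, while for a saturated $i$ the point is that for a linear centrality $C_i[h]$ is unaffected by additions not incident to $i$, so $C_i[h]-w_{i,j^{*}}<\theta_i$ for $i$'s last (hence lightest) incident edge $ij^{*}$, and since every other incident edge is at least as heavy, $C_i[h]-w_{i,k}<\theta_i$ for all $k\in N_h(i)$; hence $i$ severs nothing, and $h$ is an APSN (agents with $\theta_i\leq 0$ staying isolated throughout, consistently).

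\textbf{Where the difficulty lies.} Parts (a), (b), (d) are essentially threshold bookkeeping once the greedy is in place. The one genuinely delicate point will be the \emph{ordering} in (c): absorbing each agent's edges heaviest-first is immediate for symmetric weights (one global sweep), but with genuinely asymmetric $w_{i,j}\neq w_{j,i}$ the agents' preferred orders on pairs may contain a directed cycle, so no single global pair-order works, and one must instead argue --- by a deferred-acceptance / fixed-point construction --- that there exists a graph in which each agent's neighbour set is either non-saturating or ``minimally saturating'' (dropping its lightest incident edge un-saturates it) while no unsaturated positive pair is left unconnected. Establishing the existence of such a graph is the real content of (c).
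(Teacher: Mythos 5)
Your arguments for (a), (b) and (d) are correct and essentially identical to the paper's: (a) via $\theta_i:=C_i[g]$, (b) by the same threshold bookkeeping, and (d) by producing a graph in which every vertex's raw centrality is at most $M_i$ — the paper takes an edge-maximal member of $\{g: C_l[g]\leq M_l \ \forall l\}$ where you build one greedily, but the stability verification (last incident edge, Axiom~\ref{axiom-ind} to control later additions, the right-to-left half of the $M_i$ hypothesis applied to $h-ij$) is the same. For (c), your ``single sweep through the positive pairs in non-increasing weight, adding $ij$ whenever both endpoints are still unsaturated'' is verbatim the paper's algorithm, and your severance argument (the last-added incident edge is the lightest, so dropping any incident edge takes $C_i$ below $\theta_i$) is exactly the paper's. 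The one place you diverge is in treating the asymmetric case $w_{i,j}\neq w_{j,i}$ as ``the real content of (c)'': the paper sorts edges by a \emph{single} weight per unordered pair, so it implicitly reads linear centralities as symmetric, and under that reading your proof is complete and there is no remaining gap. Your instinct that the asymmetric case is genuinely delicate is, however, vindicated in a stronger sense than you suggest: no deferred-acceptance or fixed-point construction can save it, because APSN can simply fail to exist there. For instance, on three nodes with $w_{1,2}=w_{2,3}=w_{3,1}=1$, $w_{2,1}=w_{3,2}=w_{1,3}=2$ and all thresholds equal to $2$, every one of the eight graphs admits an improving flip (each two-edge path and the triangle has a saturated center that can shed its weight-$1$ edge at no truncated-centrality loss, while every graph with at most one edge leaves an unsaturated positive pair unconnected). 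So the correct resolution of your open point is not a cleverer existence proof but the observation that part (c) only holds for the symmetric reading the paper's own proof assumes.
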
 

In spite of the previous result, we can still talk about learning agent utility functions. However, now we will not want to learn agent centralities (which we will, in fact, assume known), but   \textbf{agent thresholds.} The learning model we will assume is a type of \emph{oracle learning} \cite{ang:j:learning}.  In our case oracle  queries are pairs $(g,i)$ consisting of a network $g$  and an agent index $i$. Given query $(g,i)$ the oracle will either reply with an APSN $h$ such that 
$C_{i}[h]>C_{i}[g]$, or with "NONE", in the case such an APSN $h$ does not exist. 

Thresholds may fail to be identifiable in this model. One reason is \emph{the coarse resolution of our model}: for instance, any values between two consecutive integers (e.g. 2.3 and 2.7) are completely equivalent as thresholds for degree centrality, since degrees in graphs are integral, and jumps in centrality (as a result of an edge flip) have a magnitude at least one. Also, the fact that (consistent with the model in our Theorem~\ref{learn-mon}) we only get APSN as oracle answers may further impede the precision identification of $\theta_i$. One such "bad" case is when for all APSN $g$, $C_i[g]<\theta_{i}$: in this case all estimates provided by the oracle on the value of the threshold are too low. Nevertheless we can prove the following result that, in a sense, provides the best we can hope for: a numerical interval (arising from a single edge flip) encompassing the threshold value: 

\begin{theorem} Given an agent $i$, assume that there exists an APSN $h$ with $\theta_{i}\leq C_{i}[h]$. Then 
there exists an algorithm that uses oracle queries and outputs an APSN $g$ and an edge $ij\in E(g)$ such that $\theta_i\in [C_{i}[g-ij],C_{i}[g]]$. If $C_i$ is a linear centrality than this algorithm runs in polynomial time. 
\label{learn-thr}
\end{theorem}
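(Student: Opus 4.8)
The plan is to have the algorithm \emph{climb}, via the oracle, through a chain of APSN along which $C_i$ strictly increases, until the oracle certifies that no APSN achieves a larger $C_i$-value; at that moment the edge-criticality clause of Theorem~\ref{thr-thm}(b) pinpoints the sandwiching edge. I may assume $\theta_i$ lies strictly above the minimum value attainable by $C_i$ --- this minimum equals $0$ for all regular centralities (in particular all linear ones), attained on isolated vertices and on $\emptyset_n$ --- since otherwise the $\theta_i$-truncation is the constant $\theta_i$, each agent $i$ strictly prefers to drop every incident link, $i$ is isolated in every APSN, and the claim is vacuous or settled directly. Fix a $C_i$-minimizing network $g^{0}$ (e.g. $g^0=\emptyset_n$).

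\emph{Phase 1 (climbing).} Put $g:=g^0$ and repeat: submit the query $(g,i)$; while the oracle returns an APSN $h$ (necessarily with $C_i[h]>C_i[g]$), reset $g:=h$ and continue. Since there are finitely many networks on $n$ vertices, the values $C_i[g]$ along the run strictly increase and the loop halts, the oracle eventually answering ``NONE''. The first query is not answered ``NONE'': the hypothesis supplies an APSN $h_0$ with $C_i[h_0]\ge\theta_i>C_i[g^0]$, so an APSN with strictly larger $C_i$-value than $g^0$ does exist. Hence the terminal $g$ was itself returned by the oracle, so it is an APSN, and the ``NONE'' answer certifies $C_i[g]=\max\{C_i[h]:h\text{ an APSN}\}\ge C_i[h_0]\ge\theta_i$. (Which APSN a possibly adversarial oracle returns at each step is immaterial: the climb still terminates at a globally $C_i$-maximal APSN.)

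\emph{Phase 2 (extracting the edge).} As $C_i[g]\ge\theta_i>0$ while an isolated vertex has $C_i=0$, vertex $i$ is non-isolated in $g$; fix any incident edge $ij\in E(g)$. Because $g$ is pairwise stable for the truncated game at all sufficiently small costs $\epsilon>0$, deleting $ij$ is not an improving move for $i$, i.e. $u_i(g-ij)-u_i(g)=\min(C_i[g-ij],\theta_i)-\min(C_i[g],\theta_i)+\epsilon\le 0$ for all such $\epsilon$; since $\min(C_i[g],\theta_i)=\theta_i$, this forces $\min(C_i[g-ij],\theta_i)<\theta_i$, i.e. $C_i[g-ij]<\theta_i$ --- precisely the edge-criticality clause of Theorem~\ref{thr-thm}(b). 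Hence $\theta_i\in(C_i[g-ij],\,C_i[g]]\subseteq[C_i[g-ij],\,C_i[g]]$, and the algorithm outputs $(g,ij)$. If $C_i=C_w$ is linear, its value is a nonnegative integer at most $W:=\sum_j w_{i,j}$, strictly increasing along the climb, so Phase 1 makes at most $W+1$ oracle calls --- polynomial for polynomially bounded weights --- with only polynomial bookkeeping per step, and Phase 2 (forming $g-ij$, evaluating $C_i$) is polynomial too.

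\emph{Where the difficulty lies.} Termination of the climb is immediate from finiteness; the substantive point is that the climb cannot stall at an APSN with $C_i<\theta_i$, which hinges on reading the oracle's ``NONE'' as a \emph{global} optimality certificate over all APSN and combining it with the hypothesised witness $h_0$ --- this is exactly what the assumption ``there is an APSN $h$ with $\theta_i\le C_i[h]$'' buys us (it also guarantees $i$ owns a deletable incident edge in the terminal $g$, once the degenerate $\theta_i\le 0$ regime is set aside). Everything else is bookkeeping, including the observation that the terminal $g$ is genuinely an APSN because every non-initial iterate is an oracle output.
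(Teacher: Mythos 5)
Your proposal is correct and follows essentially the same route as the paper: climb from $\emptyset_n$ via repeated oracle queries until ``NONE'', observe that the terminal APSN $g$ has $C_i[g]\geq\theta_i$, and use pairwise stability (the edge-criticality clause of Theorem~\ref{thr-thm}(b)) to conclude $C_i[g-ij]<\theta_i$ for any incident edge, with the same integrality argument for the polynomial bound in the linear case. Your explicit handling of the degenerate regime $\theta_i\leq C_i[\emptyset_n]$ and of why $i$ is non-isolated in the terminal $g$ only makes explicit what the paper leaves implicit.
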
 
\section{Related Literature} 

The area of network games is quite large, and a comprehensive survey is impossible. We list here two such overviews: the first one, most relevant to our interests, is due to  \citet{jackson2005survey}. Another one with an algorithmic bent is due to \citet{tardos2007network}. The model that we are concerned with is a variant of the symmetric connection model \cite{jackson-wolinsky} (see also \citet{jackson-networks}). Some notable subsequent work includes \cite{dutta1997stable,jackson2002evolution,gilles2000original}. For an alternative model see  \citet{bala2000noncooperative}. 

Our work owes much to the axiomatic approach to network centralities. There has been significant work in this area both in the Theoretical Computer Science and A.I. communities \cite{boldi2014axioms,boldi2017rank,skibski2016attachment,skibski2017axiomatic}; \\\cite{bandyopadhyay2017generic,skibski2018axioms,wkas2018axiomatization}. 

A lot of related work exists in the theoretical computer science literature: for example, \citet{fabrikant2003network} (and a whole host of papers that extend this model) consider a game where edges are added at the incentive of \textit{one} of the endpoints. In contrast, our model takes a pairwise perspective, similar to \citet{jackson-wolinsky}. Other relevant work in this direction includes  
 \cite{hopcroft2008network}, who discuss an oriented model in which  nodes have control over outgoing edges. There is no cost for changing their links, and their purpose is to increase their Pagerank. They show that 
the Nash equilibria in this game have a fairly sophisticated structure (see also \citet{chen2009alpha}). Undirected versions of this game have been studied  as well \citep{avis2014reputation}. 
On the other hand \citet{avin2018preferential} prove that preferential attachment models can be seen as Nash equilibria of some network games. 

Other related work comes from the sociology literature \cite{hummon2000utility,buskens2008dynamics,narayanam2011topologies}. For instance, in the Buskens and Van De Rijt model every node strives to fill "structural holes" (including lack of connnectedness) between nodes. This is somewhat analogous to maximizing betweenness, but the precise model (and the results) are different. 

Our model allows heterogeneity in agents' utilities, corresponding to distinct measures of centrality. Heterogeneous network formation models have been studied previously, e.g. \citet{galeotti2006network}. 

Finally, several papers (e.g. \citet{crescenzi2016greedily,bergamini2018improving}) have treated the problem of improving the centrality of a node by adding or removing links. Our work is different in several respects: first of all, in our setting \emph{all} agents aim to improve their respective centralities. Second, in our model maintaining a link has a (small) cost. 

\section{Conclusions and Possible Extensions} 

We have shown that our models can accommodate a wide range of agent centrality objectives. Still, we do not see our results as adequate enough yet for the analysis of real-life networks. They have, instead, more of a proof-of-concept nature, showing that realistic network topologies could arise from strategic models of centrality maximisation, and could conceivably be made more realistic in many ways. Some variations (we believe) worth investigating are listed below:  

\noindent \textbf{Probabilistic edge addition/removal:} In real life an edge may only form with some probability even though both agents would benefit from it. Studying such a variation would be interesting, especially as it could produce networks with core structures that are dense but not quite complete. 

\noindent \textbf{Strong and weak links, forced links, affiliation models: }  In the version that we have discussed all the links are weak links. A natural extension allows for both strong and weak links. This would entail using two types of costs:  fixed, constant costs for the strong links, small ("$\epsilon$") costs for weak links.  A second, orthogonal,  distinction that could be useful is that of forced versus free links. 
Forced links may be a consequence of \emph{affiliation}: people meet as the result of joining the same clubs.  A possibly relevant model  is the \emph{social effort model} of \cite{borgs2010hitchhiker}. Another one is the \emph{social clubs} model of \cite{fershtman2018social}. For centrality in affiliation networks see \citet{faust1997centrality}. 

\noindent \textbf{Manipulating link strength: } In such a scenario presumably the agent could manipulate \emph{the strength} of the link. Whether to completely sever a link is a different decision. Further extensions could incorporate such decisions. 

\noindent \textbf{Tagged networks: } One possibility is that agents have a \emph{tag} and care about the tags of their neighbors/components. The most natural model is that they care about their neighbors, like in Schelling's segregation model. 

\noindent \textbf{Spatial agents: } Agents interaction may result from placement in space. A standard reference for the spatial version of the connection model is \\ \cite{gilles2000original}. 

\noindent \textbf{Multilayer networks: }  Sometimes (e.g. \citet{dickison2016multilayer}) link formation may encompass multiple, correlated, link types. E.g. two coworkers may end up being friends as well. It would be interesting to formulate multilayer models of network formation \'a la Jackson-Wolinsky. 

\noindent \textbf{Overlapping community structure: }  networks may have overlapping communities. For centrality in such models see \citet{szczepanski2014centrality}; \\ \citet{tarkowski2016closeness,gupta2016centrality,ghalmane2019centrality}. 

\noindent \textbf{Dynamic models: } Finally, our concepts of network stability are steady-state concepts. It would be interesting to study the emerging networks in \emph{dynamic} models of network formation with a similar philosophy.

Finally, our work leaves a large number of issues open: we list, for instance, only one of them: can one define (truncated) centrality games for which APSN fail to exist ? 




\newpage


\section*{Appendix: Deferred Proofs} 

\subsection{Proof of Theorem 1}

For closeness centrality games the formula for agents' utility is: 

\[
u_{i}(g)= \frac{1}{\sum\limits_{j\in Conn(i)} d(i,j)} - c \cdot deg(i)
\]

So, if $j\in Conn(i)$, $ij\not \in g$ and $h=g+ij$ we have 
\begin{align*}
& u_{i}(h)-u_{i}(g)=    \frac{1}{1+\sum\limits_{k\in Conn(i),k\neq j} d_{h}(i,k)}- 
- \frac{1}{d_{g}(i,j)+\sum\limits_{k\in Conn(i),k\neq j} d_{g}(i,k)}>0. 
\end{align*} 
The inequality holds since $d_{g}(i,j)>1$ and $d_{h}(i,k)\leq d_{g}(i,k)$ for all $k$. 

If  $j\not \in Conn(i),  u_{i}(g+ij)-u_{i}(g)=$ 
\begin{align*} 
&  \frac{1}{\sum\limits_{k\in Conn(i)} d(i,k)+ \sum\limits_{k\in Conn(j)} (d(j,k)+1)} - \frac{1}{\sum\limits_{k\in Conn(i)} d(i,k)}<0. 
\end{align*}
so closeness centrality satisfies Axiom 2. 

For random walk closeness centrality a fairly similar proof works: First consider the case $j\not \in Conn(i).$ Then the expected hitting times of nodes $H_{ki}$ of nodes $k\in Conn(i)$ do not change as a result of adding edge $ij$. Indeed, a random walk started in $Conn(i)$ cannot reach $j$ without reaching $i$. 

On the other hand in the formula of random walk closeness centrality we get terms corresponding to the new $k\in Conn(j)$ that can now reach $i$ as a result of adding edge $ij$. 

Consider now the case when $j\in Conn(i).$ Then it is true that $H^{\prime}_{ki}\leq H_{ki}$ for every $k\in Con(i)$ \cite{lovasz1993random}. Therefore $C_{i}(g+ij)\geq C_{i}(g)$. In fact the inequality is strict. We did not find a reference to quote, but the derivation is very easy: from elementary considerations. Let, indeed $N(i)=\{n_{1},\ldots n_{p}\}$. We have: 
\begin{align*}
& Pr[T_{k,i}>n]=\sum\limits_{v\in V}\sum_{l,s_{1},\ldots, s_{p}\geq 0}^{l+s_{1}+\ldots + s_{p}\leq n} \Pr_{k}[X_{n+1,v}^{(l,s)}]\cdot  \\
& \cdot \big(1-\frac{1}{deg(j)}\big)^{l} \prod_{\alpha=1}^{p}[1-\frac{1}{deg(r_{\alpha})}]^{s_{\alpha}}
\end{align*} 
where in the right-hand side $X_{n+1,v}^{(l,s)}$ is the event that at time $n+1$ the random walk (run \textbf{in graph $h=g\setminus \{i\}$}) is in state $v$ after having been before $l$ times in  $j$ and $s_{1},s_{2},\ldots s_{p}$ times in $n_{1},v_{2},\ldots, n_{p}$. 
 
When adding an edge from $j$ to $i$ we get a corresponding formula: 
\begin{align*} 
& Pr[T_{k,i}^{\prime} >n]= \sum_{v\in V}\sum_{l,s_{1},\ldots, s_{p}\geq 0}^{l+s_{1}+\ldots + s_{p}\leq n} \Pr_{k}[X_{n+1,v}^{(l,s)}]\cdot \\
& \cdot \big(1-\frac{1}{deg(j)+1}\big)^{l} \prod_{\alpha=1}^{p}(1-\frac{1}{deg(r_{\alpha})})^{s_{\alpha}}
\end{align*} 
The difference between the two formulas lies in the factor $(1-\frac{1}{deg(j)+1})^{l}$ in the second product. This is because in $g+ij$ we also have to condition on the walk not going from $j$ to $i$ in the first $n$ steps. 

For large enough $n$, $Pr[T^{\prime}_{k,i}>n]<Pr[T_{k,i}>n].$
So 
\[
E[T^{\prime}_{ki}]-E[T_{ki}]= \sum_{n=0}^{\infty} n(Pr[T^{\prime}_{k,i}>n]-Pr[T_{k,i}>n])<0.
\]

\section{Proof of Lemma 2}

The first part is easy: some new shortest paths form between previously disconnected vertices, and all of them pass through $i$. All other shortest paths are not affected. Hence betweenness centrality strictly increases by at least 1, unless $i$ was an isolated node (and hence there is no newly connected pair $s,t$. 

Consider now two nodes $s,t$ and let's add missing non-bridge edge $ij$. This has the following possible effects: \\
- \textbf{Case 1: The distance between  $s,t$ strictly decreases.} Then \textbf{all the (new) shortest paths} betweeen $s,t$ must pass through edge $ij$, i.e. through $i$. The fraction attributable to $s,t$ in the betweenness centrality of $i$ becomes 1, which is at least the corresponding fraction before adding edge $ij$. \\
- \textbf{Case 2: The distance between  $s,t$ stays the same.} Then all previous shortest paths remain shortest paths. On the other hand, by adding edge $ij$ one may add $c\geq 0$ new shortest paths between $s,t$ that use the edge $ij$. 
Since, for $a\leq b$ and $c\geq 0$, $\frac{a+c}{b+c}\geq \frac{a}{b}$, the fraction corresponding to $s,t$ in the formula for the betweenness centrality of $i$ weakly increases in this case as well. 

\section{Proof of Theorem 2}

\begin{proof} 

Let $H$ be an APSN. 

The formula for agents' utility is: 

\[
u_{i}(g)= \sum\limits_{j\in \widehat{N(i)}} \frac{1}{deg(j)+1} - c \cdot deg(i)
\]

Consider a pair $ij\not \in g$, and let $h=g+ij$. 

\begin{align*} 
& u_{i}(h)-u_{i}(g)= \frac{1}{deg(j)+2} +\frac{1}{deg(i)+2}-\frac{1}{deg(i)+1} = \\
& \frac{1}{deg(j)+2} -\frac{1}{(deg(i)+1)(deg(i)+2)}
\end{align*} 

Thus adding edge $ij$ to $g$ is an improving move for $i$ iff
\begin{align*} 
(deg(i)+1)(deg(i)+2)\geq deg(j)+3
\end{align*} 
i.e. iff $deg(j)\leq f(deg(i))$ with $$f(n)=(n+1)(n+2)-3$$. 
\end{proof} 

\section{Proof of Lemma 3}

Since $ij$ is a bridge edge, new nodes are reachable from $i$ by adding $ij$ (those of the connected component of $j$ in $g$), hence the maximum distance from $i$ cannot decrease.  

The second statement is equally simple: by adding an edge within a connected component shortest path distances cannot but decrease. So eccentricity (weakly) increases. 

\section{Proof of Corollary 1} 

We need the following simple 

\begin{lemma} 
Let $g$ be a network and $ij\in E(g)$. Then removing edge $ij$ from $g$ is an improving move  iff $deg(j)-1>  f(deg(i)-1)$ or $deg(i)-1> f(deg(j)-1)$. 
\label{aux}
\end{lemma}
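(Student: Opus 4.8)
The plan is to prove Lemma~\ref{aux} by a direct reduction to Theorem~\ref{degree-seg} (more precisely, to the degree-homophily condition it establishes for game-theoretic centrality), combined with the definition of an improving move for edge deletions. First I would recall that an edge deletion $g\to g-ij$ is an improving move precisely when it is strongly improving for at least one of its endpoints, say for $j$; and that $u_j(g)-u_j(g-ij) < 0$ is, by definition of utility, exactly the statement that adding the edge $ij$ back to the network $h:=g-ij$ is \emph{not} a (weakly) improving move for $j$. So the task is to characterize when re-adding $ij$ to $h$ fails to be improving for $j$.

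Next I would invoke Theorem~\ref{degree-seg}: game-theoretic centrality is degree homophilic with $f(0)=-1$ and $f(n)=(n+1)(n+2)-3$ for $n\geq 1$, meaning that adding a missing edge $ij$ to a network $h$ is an improving move for $j$ if and only if $\deg_h(i)\leq f(\deg_h(j))$. Here the relevant degrees are those in $h=g-ij$, namely $\deg_h(i)=\deg_g(i)-1$ and $\deg_h(j)=\deg_g(j)-1$. Hence re-adding $ij$ is \emph{not} improving for $j$ exactly when $\deg_g(i)-1 > f(\deg_g(j)-1)$. By the symmetric statement for the endpoint $i$, re-adding $ij$ is not improving for $i$ exactly when $\deg_g(j)-1 > f(\deg_g(i)-1)$. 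Since deletion of $ij$ is an improving move iff it is strongly improving for \emph{at least one} endpoint, the deletion is improving iff $\deg_g(j)-1 > f(\deg_g(i)-1)$ or $\deg_g(i)-1 > f(\deg_g(j)-1)$, which is the claimed condition.

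One technical point I would be careful about is the strict-versus-weak distinction: an improving edge deletion requires a \emph{strong} improvement for some endpoint, whereas the criterion in Theorem~\ref{degree-seg} as applied to additions is phrased with a non-strict inequality $\deg(j)\leq f(\deg(i))$. I would check (from the explicit formula computed in the proof of Theorem~\ref{degree-seg}, where $u_i(h)-u_i(g)=\frac{1}{\deg(j)+2}-\frac{1}{(\deg(i)+1)(\deg(i)+2)}$) that the boundary case $\deg(j)=f(\deg(i))$ gives exactly zero marginal utility, so "not strongly improving to add" coincides with the strict inequality "$\deg(i)-1 > f(\deg(j)-1)$" on the deletion side, with no off-by-one slippage. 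This arithmetic bookkeeping around the threshold is really the only place an error could creep in; the structural argument itself is a one-line unwinding of definitions.

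I expect the main (minor) obstacle to be precisely this matching of the strict/non-strict boundary and the degree shift $\deg\mapsto\deg-1$ when passing from $g$ to $g-ij$, together with making sure the edge-cost $\epsilon$ is handled uniformly: since there are finitely many graphs on $n$ vertices, all the finitely many marginal-centrality comparisons are bounded away from zero, so a single $\epsilon_0>0$ works for all of them, and the "for small enough cost" qualifier in the APSN setting is harmless. Once that is noted, the lemma follows immediately, and it is then ready to be used in the proof of Corollary~\ref{rich-thm} to pin down exactly which unions of cliques (the "stratified clique graphs") are stable.
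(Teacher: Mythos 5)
Your proof is correct and is essentially the paper's own argument: both reduce the deletion condition to the negation of the addition condition on $h=g-ij$ (removal is improving iff re-adding $ij$ to $h$ fails to be improving for at least one endpoint), and then apply the degree-homophily characterization with the degree shift $\deg\mapsto\deg-1$. The only cosmetic difference is that you ground the strict-versus-weak boundary check in the explicit game-theoretic-centrality formula from Theorem~\ref{degree-seg}, whereas the lemma is invoked for an arbitrary degree-homophilic function $f$, for which the same dichotomy is supplied directly by Axiom~\ref{axiom-ass} (whose notion of ``improving move for $i$'' already encodes the small-cost strictness), so no explicit formula is needed.
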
 
\begin{proof} 
Removing edge $ij$ is an improving move \textbf{iff} for at least one of the two nodes $i,j$ its betweenness centrality stays the same when removing the edge. In this case 
adding edge $ij$ to $h=g-ij$ is \textbf{not} an improving move and vice-versa: if adding edge $ij$ to $h$ is not an improving then one of $i,j$ has the same betweenness centrality in $g$ as in $h$, hence removing edge $ij$ is an improving move in $g$. 

By definition, adding edge $ij$ to $h$ is not improving iff 
$deg_{h}(i)>f(deg_{h}(j))$ or $deg_{h}(j)>f(deg_{h}(i))$. 
\end{proof} 

Consider now a sequence $a_{1}>a_{2}> \ldots > a_{p}> a_{p+1}=1$ be a sequence of integers such that $a_{i}-1 > f(a_{i+1}-1)$ for all $i=1,\ldots, p$. We first need to prove that all graphs of type 
$K_{a_{1}}+K_{a_{2}}+\ldots + K_{a_{p}}+lK_{0}$, $l\geq 0$, are APSN.  

This is easy, by applying points a),b),c) of the theorem: let, indeed, $y,z$ be nodes in the same clique $K_{a_r}$, $1\leq r \leq p$. We need to show that removing edge $yz$ is not an improving move. Since they are in the same clique, the degrees of $y,z$ are both equal to $a_r-1$. Since $a_r-1\leq f(a_r-1)$ 
(because $a_r\geq 2$ and $f(x)\geq x$ for $x\geq 1$), the desired conclusion follows by Lemma~\ref{aux}. 

Let now $y,z$ be nodes in different cliques, $y\in K_{a_r}$, $z\in K_{a_s}$, $a_r>a_s$. 
We have $deg(y)=a_r-1>f(a_s-1)=f(deg(z))$. By the definition, adding edge $yz$ is \textbf{not} an improving move. 

Since $0>f(0)=-1$ connecting any  isolated node to any other node is \textbf{not} an improving move. So $g$ is an APSN. 

Conversely, let $g$ be an APSN. By applying points a) and b) of the Theorem, we get that $g$ has edges between every two vertices whose degrees are in the same interval $[n_i^{*},n_{i-1}^{*}]$, where by convention $n_0^{*}=m$. 

To infer the fact that $g$ has the structure claimed in the corollary we need to prove that no other edges are present. Point c) of the theorem excludes edges between node whose degrees are \textbf{not} in the same interval. 

The only potential trouble is that there might be a node $x$ of degree $n_i^{*}$ who is connected with nodes whose degrees are in both intervals $[n_{i+1}^{*},n_{i}^{*}]$ and $[n_i^{*},n_{i-1}^{*}]$, thus "joining two cliques". We will show that something like this doesn't happen by induction on $i$. 

\textbf{Case $i=1$:} Let $z$ be a node of maximum degree $m$. Let $A$ be the set of nodes with degree in the range $[n_1^{*},n_{0}^{*}]$. Then all the nodes in $A$ are connected to each other. $z$ is not connected to any node outside $A$. If there were some other node $w$ in $A$ that is connected to a node outside $A$ then $w$ would have degree higher than $m$, a contradiction. Hence nodes in $A$ form a connected component that is a clique. 

\textbf{The induction step:} Assume we have obtained $l-1$ connected components that are cliques of size $a_1>a_2>\ldots > a_{l-1}$ satisfying the condition $a_{i}-1> f(a_{i+1}-1)$ for $i=1,\ldots, l-2$. Applying the reasoning in the induction case $i=1$ to the remaining graph we obtain a connected component of size $a_{l}$ that is a clique. Furthermore $a_{l-1}-1>f(a_{l}-1)$, since nodes in the $l$'th clique component are not connected to those in the $l-1$'st component. 

It is possible that the tail of the resulting sequence $a_1,\ldots, a_s$ is composed of components of size 1, that is isolated nodes. The required condition is satisfied, since $1-1>f(1-1)=f(0)=-1$. 

\section{Proof of Theorem 7}
Assume there was a node $p$ of degree 1. Let $q$ be the unique neighbor of $p$.  The degree of $q$ must be at least two, otherwise the connected component of $q$ would contain two vertices. Then $q$'s eccentricity would not change if removing the edge $pq$, so $q$ has an incentive to remove edge $pq$. 

The second part is easy as well: Consider an eccentricity-one APSN $g$. The eccentricity of the nodes in the center is one, i.e. center nodes are connected to all nodes. 

Suppose we remove some edge $ij$. As the minimum degree of the graph is two and its eccentricity is one, this removal does not disconnect the graph: if it did then $ij$ would be a bridge edge, and since $i,j$ have degree at least two, the eccentricity of the graph would be at least three. 

Thus the eccentricity of nodes $i,j$ goes from one to two, making their utility decrease for small enough costs. So the nodes $i,j$ do not have any incentive to remove their edge. 

Suppose there are two unconnected nodes $i,j$. 
Nor do any  two non-central nodes have any incentive to connect: there's still going to be another non-central node keeping their eccentricity to two. Thus the network is APSN. 

\section{Proof of Theorem 8}

\begin{proof} 
\begin{itemize} 
\item[a.] Let $\theta_{i}=C_i[g]$. We claim that $g$ is an APSN with respect to the truncated centrality game with centralities $C_i$ and thesholds $\theta_i$. 

Indeed, consider an edge $ij$ of $g$.  Nodes $i,j$ don't want to drop edge $ij$, since their current centrality values are $\theta_i,\theta_j$ while, by Axiom 1, their centralities would decrease below these values if they dropped $ij$. 

Let now $i,j$ be vertices such that $ij\not \in g$. Since their current centrality values are $\theta_i,\theta_j$ (at the threshold), adding edge $ij$ would not increase their truncated centralities, while incurring the extra (positive) cost of edge $ij$. So adding edge $ij$ is not an improving move. 

 \item[b.] First, by essentially repeating the proof at point a., it is easy to see that graphs with Pareto optimal centralities are APSN. 
 
 The opposite direction is equally easy: consider an APSN $h$ and two vertices $i,j$. If $ij\not \in E(h)$ then adding edge $ij$ must not be an improving move for one of $i,j$. Since $C_i,C_j$ are increasing, the only possibility is that $C_i [h]\geq \theta_i$ (so that adding edge $ij$ does not increase the truncated centrality of $i$ and, in fact, decrease its utility, because of the extra cost of edge $ij$) or, similarly,  $C_j[h]\geq \theta_j$. 
 
 Consider now the case when $ij\in h$. Because centralities are increasing and removing edge $ij$ is not an improving move, removing $ij$ must strictly decrease truncated centralities for both nodes $i,j$. This is only possible if the centralities of $i,j$ in the resulting network $l$ satisfy $C_i[l]<\theta_{i}$ and $C_j[l]<\theta_{j}$.

  \item[c.] First of all, a comment about the result of the previous section: it does \textbf{not} establish the existence of APSN, since it is not clear that the conditions in the characterization are actually feasible. This is what we show next: prove the existence of a network that satisfies them. 
  
 For games with truncated linear centralities we will prove the existence of APSN as follows: Sort the edges in decreasing weight order $w_{e_1}\geq w_{e_{2}}\geq \ldots \geq w_{e_{{n}\choose {2}}}$. Consider the following algorithm: 
 
 \begin{mdframed}
 \begin{itemize} 
 \item[] start with the empty graph $g=\emptyset_{n}$. 
 \item[] for $i=1$ to ${{n}\choose {2}}$:
 \item[] let $e_{i}=(a,b)$.
 \item[] if $C_{a}[g]<\theta_{a}$ and $C_{b}[g]<\theta_{b}$: 
 \item[] \hspace{5mm} $g=g\cup e_{i}$ 
 \item[] return $g$. 
 \end{itemize} 
 \end{mdframed}
 
 We claim that $g$ is an APSN. Indeed, none of the missing edges could be added because the condition is false: So if $ab\not \in E(g)$ then $C_{a}[g]\geq \theta_{a}$ or $C_{b}[g]\geq \theta_{b}$ at the moment when edge $at$ was considered for inclusion. Since centralities only increase during the algorithm, the condition is valid at the end of the algorithm as well. 
 
 On the other hand if $ab\in E(g)$ then at the moment node $a$ had its centrality $\geq \theta_{a}$ 
 no more edges adjacent to $a$ are added anymore. Since edges are in decreasing order, removing any edge adjacent to $a$ reduces the centrality of $a$ at least as much as the last added edge, thus bringing the centrality of $a$ below $\theta_{a}$. A similar argument works for node $b$.

 \item[d.] 
 To prove the existence of APSN for truncated centrality games satisfying the additional conditions in the hypothesis, consider a system of nonnegative thresholds $\theta_{i}$. Let $\mathcal{G}$ be the set of networks $g$ such that $C_l[g]\leq M_l$ for every vertex $l$. $\mathcal{G}$ is nonempty, since it contains the empty graph on $n$ vertices.

 Let $h$ be an edge-maximal member of $\mathcal{G}$. We claim that $h$ is an APSN. To prove this, will use the characterization of APSN as networks with  "Pareto optimal centralities". 
  
  The second condition in this characterization  is satisfied: indeed, assume that $ij\in E(h)$. Since $C_i[h]\leq  M_{i}$, $C_j[h]\leq 
 M_{j}$, by the hypothesis of the theorem it follows that $C_i[h-ij]< \theta_i$, and similarly for $j$. 
 
 As for the first condition, assuming by contradiction $ij\not \in E(h)$ but $C_i[h]<\theta_{i}$ and $C_j[h]<\theta_{j}$. By adding edge $ij$ (and obtaining network $l=h+ij$) by the condition in the hypothesis we would still satisfy condition $C_i[l]\leq 
M_{i}$ and $C_j[l]\leq 
 M_{j}$. Also, by Axiom~4, $C[k,l]\leq C[k,h]$, so 
 the condition $C[k,l]\leq M_{k}$ is still satisfied for other vertices $k\neq i,j$. This contradicts the maximality of $h$ in $\mathcal{G}$. So $ij\not \in E(h)$ implies that $C[j,h]\geq \theta_{i}$ and $C[j,h]\geq \theta_{j}$. 

 \begin{observation} In the conditions of point d., the reciprocal is also true: APSN are maximal edge-members of the set $\mathcal{G}$. 
 \end{observation}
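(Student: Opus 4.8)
The plan is to derive the reciprocal directly from the ``Pareto optimal centralities'' characterization of APSN proved in part (b) of Theorem~\ref{thr-thm}. A regular centrality (Axiom~\ref{axiom-ind}) is in particular increasing, so part (b) applies verbatim to the game of part (d): every APSN $h$ is a graph with Pareto optimal centralities, i.e.\ (i) for each $ij\notin E(h)$ one has $C_i[h]\geq\theta_i$ or $C_j[h]\geq\theta_j$, and (ii) for each $ij\in E(h)$ the removal of $ij$ leaves a network in which \emph{both} endpoints fall strictly below their thresholds. I would then establish the two halves of ``edge-maximal in $\mathcal{G}$'' separately.

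First, membership $h\in\mathcal{G}$, i.e.\ $C_l[h]\leq M_l$ for every vertex $l$. For a vertex $l$ with $deg_h(l)\geq 1$, pick any incident edge $lm\in E(h)$; property (ii) gives $C_l[h-lm]<\theta_l$, and applying the hypothesis of part (d) to the network $g=h-lm$ and the missing edge $lm$ (so that $l$ plays the role of the first endpoint) yields $C_l[h]=C_l[(h-lm)+lm]\leq M_l$. For an isolated $l$ we have $C_l[h]=0$ by regularity, and $0\leq M_l$ since centralities are nonnegative (when $\theta_l>0$ the maximum defining $M_l$ is over a nonempty family, e.g.\ the empty graph; when $\theta_l=0$ one reads $M_l=0$, the same convention already used in part (d)).

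Second, edge-maximality: for any $ij\notin E(h)$, property (i) gives, say, $C_i[h]\geq\theta_i$; the contrapositive of the biconditional in the hypothesis of part (d), which then reads $C_i[g]\geq\theta_i\iff C_i[g+ij]>M_i$, forces $C_i[h+ij]>M_i$, hence $h+ij\notin\mathcal{G}$. Together with the forward implication proved in part (d), this identifies the APSN of such games with the edge-maximal members of $\mathcal{G}$. The argument is essentially a rereading of parts (b) and (d) and carries no real combinatorial difficulty; the only step that needs a moment of care is matching the conventions for the degenerate values of $M_l$ and $\theta_l$ (isolated nodes, zero thresholds) with those tacitly used in part (d).
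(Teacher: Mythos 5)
The paper states this Observation without proof, so there is nothing of the authors' to compare yours against; what you write is the natural converse of their proof of point d., and it is correct in substance. Both halves are handled properly: membership $h\in\mathcal{G}$ follows from the second Pareto condition of part (b) together with the biconditional hypothesis applied at $g=h-lm$, and the impossibility of adding any single edge follows from the first Pareto condition together with the contrapositive of the same biconditional applied at $g=h$. Your bookkeeping for isolated vertices and for the degenerate case $\theta_l=0$ is consistent with the convention part (d) already needs in order to assert that $\emptyset_n\in\mathcal{G}$.

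The one step I would not wave through is the final inference from ``$h+ij\notin\mathcal{G}$ for every single missing edge $ij$'' to ``$h$ is an edge-maximal member of $\mathcal{G}$.'' These coincide only if $\mathcal{G}$ is closed under edge deletion, and that is not automatic here: by the second clause of Axiom~4, deleting an edge $ij$ can strictly \emph{increase} the centrality of a third vertex $k$, so a proper superset $h'\supsetneq h$ could in principle satisfy all constraints $C_l[h']\leq M_l$ even though every intermediate graph $h+ij$ violates one of them. For linear centralities --- the paper's standing example of regular measures --- $C_l$ depends only on the edges at $l$, so $\mathcal{G}$ is downward closed and your argument is complete. In general you should either add that closure property as a hypothesis, or note that the paper's proof of point d.\ only ever uses local maximality of $h$, so that what your argument actually establishes is the clean equivalence ``APSN $=$ members of $\mathcal{G}$ to which no single edge can be added within $\mathcal{G}$,'' which is presumably what the Observation intends.
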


\end{itemize} 
\end{proof}

\section{Proof of Theorem 9}

Assume that $\theta_{i}>C_{i}[\emptyset_{n}]$, otherwise the result is trivial. 

The algorithm is the obvious one: 

\begin{mdframed}
\begin{itemize} 
 \item[] start with the empty graph $g=\emptyset_{n}$. 
 \item[] while (oracle answer is not NONE):
 \item[]\hspace{5mm} query the oracle on pair $(g,i)$. 
 \item[]\hspace{5mm} let $h$ be the oracle answer
 \item[]\hspace{5mm} set $g=h$
 \item[] return $g$. 
 \end{itemize} 
 \end{mdframed}

Let $g$ be the APSN produced by the algorithm. Clearly, $\theta_{i}\leq C_{i}[g]$, since the algorithms returns an APSN with the highest centrality of $i$ which is, by the hypothesis, at least $\theta_{i}$. 

We claim that for every edge $ij\in E(g)$, $C_{i}[g-ij]<\theta_{i}$. Indeed, if this were not the case, then $i$ would have an incentive to drop edge $ij$, since its truncated centrality would not decrease, while its edge cost would. 
 
So every edge adjacent to $i$ is a good answer. 

In the case of linear centralities, the centrality of each node is integral, so any oracle step increases the centrality of $i$ by at least one. But the centrality of $i$ is upper bounded by the sum $\sum_{k} w_{i,k}$, so the algorithm runs in polynomial time in the problem size. 

\textbf{Note:} we assumed in this proof that integers $w_{i,k}$ are written in unary. Otherwise we would have a pseudo-polynomial/NP-complete complexity distinction similar to that of the  knapsack problem, depending on the representation.

\end{document}